\newtheorem{theorem}{Theorem}[section]
\newtheorem{lemma}[theorem]{Lemma}
\newtheorem{corollary}[theorem]{Corollary}
\newtheorem{conjecture}[theorem]{Conjecture}
\theoremstyle{definition}
\newtheorem{definition}[theorem]{Definition}
\newcommand{\colouring}{{\sc{Colouring}}\xspace}
\newcommand{\cliquecover}{{\sc{Edge Clique Cover}}\xspace}
\newcommand{\icliquecover}{{\sc{Compression Clique Cover}}\xspace}
\newcommand{\kwaycut}{{\sc{$k$-Way Cut}}\xspace}
\newcommand{\clique}{{\sc{Clique}}\xspace}
\newcommand{\dirmc}{{\sc{Directed Multiway Cut}}\xspace}
\newcommand{\dirvmc}{{\sc{Directed Vertex Multiway Cut}}\xspace}
\newcommand{\diremc}{{\sc{Directed Edge Multiway Cut}}\xspace}
\newcommand{\prepdirvmc}{{\sc{Promised Directed Vertex Multiway Cut}}\xspace}
\newcommand{\prepmc}{{\sc{Promised Multiway Cut}}\xspace}
\newcommand{\mc}{{\sc{Multiway Cut}}\xspace}
\newcommand{\multicut}{{\sc{Multicut}}\xspace}
\newcommand{\emulticut}{{\sc{Edge Multicut}}\xspace}
\newcommand{\vmulticut}{{\sc{Vertex Multicut}}\xspace}
\newcommand{\prel}{\mathcal{R}}
\newcommand{\Ii}{\mathcal{I}}
\newcommand{\Hh}{\mathcal{H}}
\newcommand{\unlesscompass}{\ensuremath{\textrm{NP} \subseteq \textrm{coNP}/\textrm{poly}}\xspace}
\newcommand{\defproblemnoparam}[3]{
  \vspace{1mm}
\noindent\fbox{
  \begin{minipage}{\textwidth}  
  #1 \\ 
  {\bf{Input:}} #2  \\
  {\bf{Task:}} #3
  \end{minipage}
  }
  \vspace{1mm}
}
\begin{document}

\title{Clique cover and graph separation: New incompressibility results}
\author{
  Marek Cygan\thanks{Institute of Informatics, University of Warsaw, Poland, \texttt{cygan@mimuw.edu.pl}.} \and
  Stefan Kratsch\thanks{Utrecht University, Utrecht, the Netherlands, \texttt{s.kratsch@uu.nl}.} \and
	Marcin Pilipczuk\thanks{Institute of Informatics, University of Warsaw, Poland, \texttt{malcin@mimuw.edu.pl}.} \and
  Micha\l{} Pilipczuk\thanks{Department of Informatics, University of Bergen, Norway, \texttt{michal.pilipczuk@ii.uib.no}.} \and
  Magnus Wahlstr\"{o}m\thanks{Max-Planck-Institute for Informatics, Saarbr\"{u}cken, Germany, \texttt{wahl@mpi-inf.mpg.de}.}
}
\date{}

  \maketitle

\begin{abstract}
The field of kernelization studies polynomial-time preprocessing
routines for hard problems in the framework of parameterized complexity.
Although a framework for proving kernelization lower bounds
has been discovered in 2008 and successfully applied multiple times
over the last three years, establishing kernelization complexity of many
important problems remains open.
In this paper we show that, unless \unlesscompass{} and the polynomial hierarchy collapses
up to its third level, the following parameterized problems do not
admit a polynomial-time preprocessing algorithm that reduces
the size of an instance to polynomial in the parameter:
\begin{itemize}
\item \cliquecover{}, parameterized by the number of cliques,
\item \textsc{Directed Edge/Vertex} \mc{}, parameterized by the size of the cutset, even in the case of two terminals,
\item \textsc{Edge/Vertex} \multicut{}, parameterized by the size of the cutset, and
\item \kwaycut{}, parameterized by the size of the cutset.
\end{itemize}
The existence of a polynomial kernelization for \cliquecover{} was a seasoned veteran in open problem sessions.
Furthermore, our results complement very recent developments 
in designing parameterized algorithms for cut problems by Marx and Razgon~[STOC'11], 
Bousquet et al.~[STOC'11], Kawarabayashi and Thorup~[FOCS'11] and Chitnis et al.~[SODA'12].
\end{abstract}

\section{Introduction}

In order to cope with the NP-hardness of many natural combinatorial problems, various algorithmic paradigms such as brute-force, approximation, or heuristics are applied. However, while the paradigms are quite different, there is a commonly used opening move of first applying polynomial-time preprocessing routines, before making sacrifices in either exactness or runtime.
The aim of the field of kernelization is to provide a rigorous mathematical
framework for analyzing such preprocessing algorithms. One of its core features is to provide quantitative performance guarantees for preprocessing via the framework of parameterized complexity, a feature easily seen to be infeasible in classical complexity (cf.~\cite{HarnikN10}).


In the framework of parameterized complexity an instance~$x$ of a 
parameterized problem comes with an integer parameter~$k$.
A {\em{kernelization algorithm}} ({\em{kernel}} for short) is a polynomial
time preprocessing routine that reduces the input instance~$x$
with parameter~$k$ to an equivalent instance of size bounded by~$g(k)$ for some computable function~$g$.
If~$g$ is small, after preprocessing even an exponential-time brute-force algorithm might
be feasible. Therefore small kernels, with~$g$ being linear or polynomial,
are of big interest.

Although polynomial kernels for a wide range of problems have been developed for the
last few decades (e.g.,~\cite{hitting-set:kernel,meta-kernelization,buss-goldsmith,moje:povd,fomin:bidim-kernels,vc:2k,fvs:quadratic-kernel}; see also the surveys of Guo and Niedermeier~\cite{guo:survey} and Bodlaender~\cite{Bodlaender09}), a framework for proving
kernelization lower bounds was discovered only three years ago
by Bodlaender et al.~\cite{bodlaender:kernel},
with the backbone theorem proven by Fortnow and Santhanam~\cite{fortnow:kernel}.
The crux of the framework is the following idea of a composition. Assume we are
able to combine in polynomial time an arbitrary number of instances~$x_1,x_2,\ldots,x_t$
of an NP-complete problem~$L$ into a single instance~$(x,k)$ of a parameterized
problem~$Q \in NP$ such that~$(x,k) \in Q$ if and only if one of the instances~$x_i$ is in~$L$, while~$k$ is bounded polynomially in~$\max_i |x_i|$. If such a {\em{composition}}
algorithm was pipelined with a polynomial kernel for the problem~$Q$,
we would obtain an OR-distillation of the NP-complete language~$L$:
the resulting instance is of size polynomial in~$\max_i |x_i|$,
 possibly significantly smaller than~$t$,
but encodes a disjunction of all input instances~$x_i$ (i.e., an OR-distillation is a compression of the logical OR of the instances).
As proven by Fortnow and Santhanam~\cite{fortnow:kernel}, existence of
such an algorithm would imply~\unlesscompass{}, which is known to cause
a collapse of the polynomial hierarchy to its third level~\cite{nokernel-collapse2,nokernel-collapse1}.

The astute reader may have noticed that the above description of a composition is actually using the slightly newer notion of a cross-composition~\cite{cross-composition}. This generalization of the original lower bound framework will be the main ingredient of our proofs. 
The framework of kernelization lower bounds was also extended by Dell and van Melkebeek
\cite{dell:kernel} to allow excluding kernels of particular exponent in the polynomial.
Recently, Dell and Marx~\cite{dell-marx:soda12} and, independently, Hermelin and Wu
\cite{hermelin-wu:soda12} simplified this approach and applied it to various
packing problems.

The aforementioned (cross-)composition algorithm is sometimes called an {\em{OR-composition}},
as opposed to an {\em{AND-composition}}, where we require that the output instance
$(x,k)$ is in~$Q$ if and only if {\em{all}} input instances belong to~$L$.
Various problems have been shown to be AND-compositional, with the most important
example being the problem of determining whether an input graph has treewidth no
larger than the parameter \cite{bodlaender:kernel}. It is conjectured~\cite{bodlaender:kernel}
that no NP-complete problem admits an AND-distillation, which would be a result of pipelining
an AND-composition with a polynomial kernel. However, it is now a major open problem
in the field of kernelization to support this claim with a proof based
on a plausible complexity assumption.

Although the framework of kernelization lower bounds has been applied successfully
multiple times over the last three years (e.g.,
\cite{cross-composition,bjk:treewidth,nasze:eulerian,nasze:wg10,colours-and-ids,fernau:outtrees,stefan:ramsey,stefan:two}),
there are still many important
problems where the existence of a polynomial kernel is widely open.
The reason for this situation is that an application of the idea of a composition
(or appropriate reductions, called {\em{polynomial parameter transformations}}~\cite{bodlaender:transformations}) is far from being automatic.
To obtain a composition algorithm, usually one needs to carefully choose
the starting language~$L$ (for example, the choice of the
starting language is crucial for 
compositions of Dell and Marx~\cite{dell-marx:soda12},
and the core idea of the composition algorithms for connectivity problems
in degenerate graphs~\cite{nasze:wg10} is to use {\sc{Graph Motif}} as a starting point)
or invent sophisticated gadgets to merge the instances
(for example, the colors and IDs technique introduced by Dom et al.~\cite{colours-and-ids}
 or the idea of an instance selector, used mainly for structural parameters~\cite{cross-composition,bjk:treewidth}).

\paragraph{Our results.}

The main contribution of this paper is a proof of non-existence of polynomial kernels for four important
problems.

\begin{theorem}
\label{thm:main}
Unless \unlesscompass, \cliquecover, parameterized by the number of cliques, as well as \mc, \multicut and \kwaycut, parameterized by the size of the cutset, do not admit polynomial kernelizations.
\end{theorem}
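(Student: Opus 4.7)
The plan is to apply the cross-composition framework recalled in the introduction, pipelined with the Fortnow--Santhanam theorem: for each target problem $Q$ in the statement, I would design a polynomial-time algorithm that merges $t$ equivalent instances $x_1,\ldots,x_t$ of a carefully chosen NP-hard source $L$ (each of size $n$) into a single instance $(y,k)$ of $Q$ such that $(y,k)\in Q$ iff some $x_i\in L$, with $k$ bounded polynomially in $n+\log t$. Chaining such a cross-composition with a hypothetical polynomial kernel for $Q$ yields an OR-distillation for $L$, implying \unlesscompass{}. The four statements will therefore be proven independently, sharing only the meta-template.

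For the family of cut problems (\mc{}, \multicut{}, \kwaycut{}, and their directed variants), I would use a \emph{selector tree} as the main gadget. On top, a binary tree of depth $\lceil\log t\rceil$ contains vertices/edges that can be cut to isolate a single leaf at cost $O(\log t)$. Each leaf is attached to a gadget simulating the corresponding source instance $x_i$, designed so that a feasible cut of weight $p_i$ inside the gadget exists iff $x_i\in L$. A feasible cut of the composed instance would then have size $O(\log t+\max_i p_i)$, polynomial in $n+\log t$. Tuning this template requires a different source for each target: a constrained SAT-like source for \mc{} and \multicut{}, a balanced-partition source for \kwaycut{}, and, most delicately, a source for \textsc{Directed Vertex Multiway Cut} with two terminals that already encodes a bidirectional separation requirement (plausibly a directed constraint satisfaction or $2$-commodity variant), since with only two terminals one must forbid paths in both directions with a single vertex cutset.

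For \cliquecover{}, the cross-composition cannot be additive because cliques lack the compositional structure of cuts. Here I would exploit the generous budget allowed by the Gramm et al.\ $2^{O(k^2)}$ kernel by encoding each instance index as an $O(\log t)$-bit string and building a host graph whose edges split into \emph{global} edges, coverable by a small set of structural cliques common to all instances, and \emph{instance-specific} edges, coverable only by cliques whose vertex labels are consistent with a single chosen instance. The starting language should be an NP-hard problem whose YES-witnesses correspond naturally to cliques, for example a structured graph coloring or hypergraph covering problem; the parameter $k$ would then be $O(\log t)+\mathrm{poly}(n)$.

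The hard part in all four cases is preventing \emph{instance mixing}. For the cut problems, the risk is that the optimal cutset splits its budget across several instances, cutting each partially and thereby certifying no single $x_i$; this must be ruled out by high-capacity (edge multiplicities or parallel vertices) along all but one root-to-leaf path in the selector tree, so that every cheap cut ``commits'' to a single instance. For \cliquecover{} the analogous risk is a clique covering edges from two distinct instance encodings simultaneously, which forces the host graph to be rigidly constructed so that any sufficiently large clique has a unique consistent label. This rigidity---rather than any counting argument---is where I expect the main design difficulty to lie, and it likely dictates the choice of the source language for \cliquecover{}.
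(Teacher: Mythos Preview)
Your meta-plan---cross-compose from a tailored NP-hard source into each target, then invoke Fortnow--Santhanam---is exactly the paper's route. But the concrete mechanisms you sketch diverge from the paper's, and for \kwaycut{} your template does not obviously apply.

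The paper uses no selector trees for the cut problems. Its central device is instead to establish NP-hardness of carefully restricted \emph{promise} versions of classical problems, whose extra structure rules out instance mixing under very simple compositions. For \textsc{Directed Multiway Cut} with two terminals, the source carries the promise that any cut of size at most $p/2$ leaves both an $s_1s_2$- and an $s_2s_1$-path; then simply chaining the $t$ instances in a line (identifying $s_2^i$ with $s_1^{i+1}$) forces any size-$p$ cut to lie entirely in one instance---no selector needed, and the parameter stays $p$, not $p+O(\log t)$. For \multicut{} the source is three-terminal \mc{} with equal terminal degrees and a unique-mincut condition, composed on a cycle with $\Theta(t)$ terminal pairs. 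For \kwaycut{} the source is \clique{}: all instances (encoded as weighted complete graphs) hang off a single root, and a three-scale weight system $1\ll w_1\ll w_2$ together with the convexity inequality $\sum a_i^2\le(\sum a_i)^2$ forces the deleted edges into one instance. Your selector-tree idea is most problematic here: \kwaycut{} has no terminals, so ``isolate a leaf at cost $O(\log t)$'' is not a constraint the problem enforces, and nothing in your template prevents the remaining budget from scattering across subtrees, each creating a few components cheaply.

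For \cliquecover{} your $\log t$-bit selector intuition is closest to the paper, which realises it by $\log t$ copies of $K_6$ minus a perfect matching (each admitting essentially two $4$-clique covers that encode a bit). The crucial ingredient you do not identify is the source language: a \emph{compression} variant of \cliquecover{} itself, where a $(k{+}1)$-cover is part of the input and one asks whether a $k$-cover exists (shown NP-hard via $3$-\textsc{Colouring} of $4$-regular planar graphs). The given $(k{+}1)$-covers are precisely what let all non-selected instances be covered ``for free'' by $k$ shared cliques plus one extra clique donated by each bit-gadget; without this compression twist, the global/instance-specific split you describe has no evident way to close.
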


The common theme of our compositions is a very careful choice of starting problems.
Not only do we select particular NP-complete problems, but we also restrict instances given as the input,
to make them satisfy certain conditions that allow designing cross-compositions.
Each time we constrain the set of input instances of an NP-complete problem
we need to prove that the problem remains NP-complete.
Even though this paper is about negative results, in our constructions we use
intuition derived from the design of parameterized algorithms techniques,
including iterative compression (in case of \cliquecover) introduced by Reed et al.~\cite{reed:ic} and 
important separators (in case of \multicut) defined by Marx~\cite{marx:cut}.

For the three cut problems listed in Theorem~\ref{thm:main} our kernelization hardness results complement
very recent developments in the design of algorithm parameterized by the size of the cutset~\cite{multicut-fpt1, marx:dir-mc,multicut-fpt2,  kway:thorup}.
In the following we give some motivation and related work for each of the four problems.

\paragraph{Edge clique cover.}
In the \cliquecover{} problem the goal is to cover the edges of an input graph~$G$ with at most~$k$ cliques
all of which are subgraphs of~$G$. This problem, NP-complete even in very restricted graph classes~\cite{chang-muller:cc,hoover:cc,orlin:cc}, is also known as \textsc{Covering by Cliques} (GT17), \textsc{Intersection Graph Basis} (GT59)~\cite{garey-johnson} and \textsc{Keyword Conflict}~\cite{kellerman}.
It has multiple applications in various areas in practice,
  such as computational geometry~\cite{cc-apl1},
  applied statistics~\cite{cc-apl2,cc-apl3}, and compiler optimization~\cite{cc-apl4}.
In particular, \cliquecover{} is equivalent to the problem of finding a representation
of a graph~$G$ as an intersection model with at most~$k$ elements in the universe
\cite{cliquecover:erdos,cliquecover:book,cliquecover:appl}.
Therefore, an algorithm for \cliquecover{} may be used to reveal a structure
in a complex real-world network~\cite{guillaume:cc}.
Due to its importance, the \cliquecover problem was studied from various perspectives,
including approximation upper and lower bounds~\cite{apx:cc,lund-yannakakis},
heuristics~\cite{bt:cc,cc-apl2,kellerman,kou:cc,cc-apl3,cc-apl4} and
polynomial-time algorithms for special graph classes~\cite{hoover:cc,cc-class2,cc-class1,orlin:cc}.

From the point of view of parameterized complexity,
\cliquecover was extensively studied by Gramm et al.~\cite{gghn:cc}.
A simple kernelization algorithm is known that reduces the size of the graph
to at most~$2^k$ vertices; the best known fixed-parameter algorithm
is a brute-force search on the~$2^k$-vertex kernel.
The question of a polynomial kernel for \cliquecover,
probably first verbalized by Gramm et al.~\cite{gghn:cc},
was repeatedly asked in the parameterized complexity community, for example
on the last Workshop on Kernels (WorKer, Vienna, 2011).
We show that \cliquecover is both AND- and OR-compositional (i.e., both an AND- and an OR-composition algorithm exist for some NP-complete input language~$L$),
thus the existence of a polynomial kernel would both cause a collapse of the polynomial
hierarchy as well as violate the AND-conjecture.
To the best of our knowledge, this is the first natural parameterized problem
that is known to admit both an AND- and an OR-composition algorithm.

\paragraph{Multicut and directed multiway cut.}
With \multicut and \dirmc we move on to the family of graph separation problems. 
The central problems of this area are two natural generalizations of the~$s-t$ cut problem,
namely \mc and \multicut. In the first problem we are given a graph~$G$ with
designated terminals and we are to delete at most~$p$ edges (or vertices, depending
on the variant) so that the terminals remain in different connected components.
In the \multicut problem we consider a more general setting where the input graph
contains terminal {\em{pairs}} and we need to separate all pairs of terminals.


As generalizations of the well-known~$s-t$ cut problem,
\mc and \multicut received a lot of attention in past decades.
\mc is NP-complete even for the case of three terminals~\cite{nmc-np-hardness},
thus the same holds for \multicut with three terminal pairs.
Both problems were intensively studied from the approximation perspective
\cite{mc-apx-15,multicut-apx,mc-apx-2,mc-apx-138,dir-mc-apx}.
The graph separation problems became one of the most important subareas in parameterized complexity after Marx introduced the concept of important separators~\cite{marx:cut}.
This technique turns out to be very robust, and is now a key ingredient
in fixed-parameter algorithms for various problems such as variants
of the {\sc{Feedback Vertex Set}} problem~\cite{dfvs,sfvs} or {\sc{Almost 2-SAT}}
\cite{almost2sat-fpt}.
A long line of research on \mc in the parameterized setting include~\cite{chen:mc,nmc:2k,Guillemot11a,marx:cut,razgon:arxiv2010,razgon:arxiv2011,xiao:multiway2010};
the current 
fastest algorithm runs in~$O(2^p n^{O(1)})$ time~\cite{nmc:2k}.
It is not very hard to prove that \multicut, parameterized by both the number of terminals
and the size of the cutset, is reducible to \mc~\cite{marx:cut}. Fixed-parameter tractability
of \multicut parameterized by the size of the cutset only,
after being a big open problem for a few years, was finally 
resolved positively in 2010~\cite{multicut-fpt1,multicut-fpt2}.

In directed graphs \mc is NP-complete even for two terminals~\cite{mc-apx-2}.
Very recently Chitnis et al.~\cite{marx:dir-mc} showed that \dirmc is fixed-parameter tractable.
The directed version of \multicut, parameterized by the size of the cutset,
is~$W[1]$-hard~\cite{multicut-fpt2}
(i.e., an existence of a fixed-parameter algorithm is unlikely).
The parameterized complexity of {\sc{Directed}} \multicut with fixed number of terminal pairs
or with the number of terminal pairs as an additional parameter remains open.

Although the picture of the fixed-parameter tractability of the graph separation problems
becomes more and more complete, very little is known about polynomial kernelization.
Very recently, Kratsch and Wahlstr\"{o}m came up with a genuine application
of matroid theory to graph separation problems.
They were able to obtain randomized polynomial kernels for
{\sc{Odd Cycle Transversal}}~\cite{stefan:oct}, {\sc{Almost 2-SAT}}, and
\mc and \multicut restricted to a bounded number of terminals, among others~\cite{stefan:new}.
We are not aware of any other results on kernelization of the graph separation problems.

We prove that \dirmc, even in the case of two terminals, as well as
\multicut, parameterized by the size of the cutset, are OR-compositional,
thus a polynomial kernel for any of these two problems would cause
a collapse of the polynomial hierarchy.
In fact, we give two OR-composition algorithms for \multicut:
the constructions are very different and the presented gadgets may inspire other
researchers in showing lower bounds for similar problems.

\paragraph{The $\boldsymbol{k}$-way cut problem.}
The last part of this work is devoted to another generalization of the~$s$-$t$ cut problem,
 but of a bit different
flavor. The \kwaycut{} problem is defined as follows: given an undirected graph~$G$
and integers~$k$ and~$s$, remove at most~$s$ edges from~$G$ to obtain a graph with
at least~$k$ connected components.
This problem has applications in numerous areas of computer science, such as
finding cutting planes for the traveling salesman problem,
clustering-related settings (e.g., VLSI design) or network reliability~\cite{kway:motiv}.
In general, \kwaycut{} is NP-complete~\cite{kway:first} but solvable in polynomial
time for fixed~$k$: a long line of research~\cite{kway:first,kway:improve2,kway:improve1,kway:thorup} led to a deterministic
algorithm running in time~$O(mn^{2k-2})$.
The dependency on~$k$ in the exponent is probably unavoidable:
from the parameterized perspective, the \kwaycut{} problem parameterized
by~$k$ is~$W[1]$-hard~\cite{kway:w1hard}. Moreover, the node-deletion variant
is also~$W[1]$-hard when parameterized by~$s$~\cite{marx:cut}.
Somewhat surprisingly, in 2011 Kawarabayashi and Thorup presented a fixed-parameter
algorithm for (edge-deletion) \kwaycut{} parameterized by~$s$~\cite{ken:kwaycut}.
In this paper we complete the parameterized picture of the edge-deletion \kwaycut{} problem
parameterized by~$s$ by showing that it is OR-compositional and, therefore, a polynomial kernelization
algorithm is unlikely to exist.

\paragraph{Organization of the paper.} We give some notation and formally
introduce the composition framework in Section \ref{sec:prelim}.
In subsequent sections we show compositions for the aforementioned four problems:
we consider \cliquecover in Section \ref{sec:clique-cover-full}, \dirmc in Section \ref{sec:dir-mc-full},
\multicut in Section \ref{sec:multicut-full} and Section \ref{sec:multicut2} and \kwaycut in Section \ref{sec:k-way-full}.
Section \ref{sec:conc} concludes the paper.

\paragraph{Acknowledgements.} We would like to thank Jakub Onufry Wojtaszczyk for
some early discussions on the kernelization of the graph separation problems.

\section{Preliminaries}\label{sec:prelim}

\paragraph{Notation.} We use standard graph notation.
For a graph~$G$, by~$V(G)$ and~$E(G)$ we denote its vertex and edge set (or arc set in case
of directed graphs), respectively. For~$v \in V(G)$, its neighborhood~$N_G(v)$ is defined by~$N_G(v) = \{u: uv\in E(G)\}$, and~$N_G[v] = N_G(v) \cup \{v\}$ is the closed
neighborhood of~$v$.
We extend this notation to subsets of vertices:~$N_G[X] = \bigcup_{v \in X} N_G[v]$ and~$N_G(X) = N_G[X] \setminus X$.
For~$X\subseteq V(G)$ by~$\delta_G(X)$ we denote the set of edges in~$G$
with one endpoint in~$X$ and the other in~$V(G)\setminus X$.
For simplicity for a single vertex~$v$ we let~$\delta(v)=\delta(\{v\})$.
We omit the subscripts if no confusion is possible.
For a set~$X \subseteq V(G)$ by~$G[X]$ we denote the subgraph of~$G$ induced by~$X$.
For a set~$X$ of vertices or edges of~$G$, by~$G \setminus X$ we denote the graph
with the vertices or edges of~$X$ removed; in case of a vertex removal, we remove
also all its incident edges. For sets~$X,Y \subseteq V(G)$, the set~$E(X,Y)$
contains all edges of~$G$ that have one endpoint in~$X$ and the second endpoint in~$Y$.
In particular,~$E(X,X) = E(G[X])$ and $E(X, V(G)\setminus X) = \delta_G(X)$.
For a (directed) graph~$G$ by an~$st$-path we denote any path that starts in~$s$ and ends in~$t$.

For two disjoint vertex sets~$S$,~$T$ by an~$S$--$T$ cut we denote any set of edges, which removal ensures that there is no path from a vertex in~$S$ to a vertex in~$T$ in the considered graph. By minimum~$S$--$T$ cut we denote an~$S$--$T$ cut of minimum cardinality. 
\paragraph{Parameterized complexity.} In the parameterized complexity setting, an instance comes with an integer parameter~$k$ ---
formally, a parameterized problem~$Q$ is a subset of~$\Sigma^* \times \mathbb{N}$ for some finite alphabet~$\Sigma$.
We say that the problem is {\em{fixed parameter tractable}} ({\em{FPT}}) if there exists an algorithm solving any instance~$(x,k)$ in
time~$f(k) {\rm poly}(|x|)$ for some (usually exponential) computable function~$f$.
It is known that a problem is FPT iff
it is kernelizable: a kernelization algorithm for a problem~$Q$ takes an instance~$(x,k)$ and in
time polynomial in~$|x|+k$ produces an equivalent instance~$(x', k')$ (i.e.,~$(x,k)\in Q$ iff~$(x',k')\in Q$) such that~$|x'| + k' \leq g(k)$ for some computable function~$g$.
The function~$g$ is the {\em{size of the kernel}}, and if it is polynomial, we say that~$Q$ admits a polynomial kernel.

\paragraph{Kernelization lower bounds framework.}
We use the cross-composition technique introduced by Bodlaender et al.~\cite{cross-composition} which builds upon Bodlaender et al.~\cite{bodlaender:kernel} and Fortnow and Santhanam~\cite{fortnow:kernel}.
\begin{definition}[Polynomial equivalence relation \cite{cross-composition}]
An equivalence relation~$\mathcal{R}$ on~$\Sigma^\ast$ is called a {\em{polynomial equivalence
  relation}} if (1) there is an algorithm that given two strings~$x,y \in \Sigma^\ast$
  decides whether~$\mathcal{R}(x,y)$ in~$(|x|+|y|)^{O(1)}$ time; (2) for any finite set~$S \subseteq \Sigma^\ast$
  the equivalence relation~$\mathcal{R}$ partitions the elements of~$S$ into at most~$(\max_{x \in S} |x|)^{O(1)}$ classes.
\end{definition}

\begin{definition}[Cross-composition \cite{cross-composition}]
Let~$L \subseteq \Sigma^\ast$ and let~$Q \subseteq \Sigma^\ast \times \mathbb{N}$ be
a parameterized problem. We say that~$L$ {\em{cross-composes}} into~$Q$ if there is a polynomial
equivalence relation~$\mathcal{R}$ and an algorithm which, given~$t$ strings~$x_1, x_2, \ldots x_t$
belonging to the same equivalence class of~$\mathcal{R}$, computes an instance~$(x^\ast,k^\ast) \in \Sigma^\ast \times \mathbb{N}$ in time polynomial in~$\sum_{i=1}^t |x_i|$
such that (1)~$(x^\ast,k^\ast) \in Q$ iff~$x_i \in L$ for {\bf some}~$1 \leq i \leq t$; (2)~$k^\ast$ is bounded polynomially in~$\max_{i=1}^t |x_i| + \log t$.
\end{definition}

\begin{theorem}[\cite{cross-composition}, Theorem 9]\label{thm:cross-composition}
If~$L \subseteq \Sigma^\ast$ is NP-hard under Karp reductions
and~$L$ cross-composes into the parameterized problem~$Q$ that
has a polynomial kernel, then~$\unlesscompass$.
\end{theorem}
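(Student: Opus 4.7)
The plan is to reduce the hypothesis ``$L$ cross-composes into $Q$ and $Q$ admits a polynomial kernel'' to the setup of Fortnow and Santhanam's incompressibility theorem for OR-distillations of NP-hard languages: such a distillation of an NP-hard $L$ forces \unlesscompass. Concretely, I want to exhibit a polynomial-time procedure which, given $t$ instances $x_1,\ldots,x_t$ of $L$ of size at most $n$, outputs a single instance of some NP language of size polynomial in $n+\log t$ that is YES iff some $x_i$ lies in $L$.

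To build this procedure, first fix the polynomial equivalence relation $\mathcal{R}$ supplied by the cross-composition; by its defining property the $x_i$ fall into at most $n^{O(1)}$ equivalence classes $S_1,\ldots,S_p$, a partition computable in polynomial time. For each class $S_j$ I would run the cross-composition to produce an instance $(y_j,k_j)$ of $Q$ with $k_j\le (n+\log t)^{O(1)}$, then apply the assumed polynomial kernel to reduce it to an equivalent instance $(y_j',k_j')$ of size $(n+\log t)^{O(1)}$. By correctness of both the composition and the kernel, $(y_j',k_j')\in Q$ iff some $x_i\in S_j$ lies in $L$. Taking the disjunction of the $p$ kernelized instances yields a single instance of the NP language ``$Q$-or-$Q$-or-$\ldots$'' of total size $n^{O(1)}\cdot(n+\log t)^{O(1)}=(n+\log t)^{O(1)}$ that is YES iff some $x_i\in L$.

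With this OR-distillation in hand, I would invoke Fortnow and Santhanam's theorem: an NP-hard language admitting such a distillation into an NP language forces \unlesscompass, which in turn collapses the polynomial hierarchy to its third level. Thus the hypothesised kernel cannot exist, and the statement follows.

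The step I expect to be most delicate is the bookkeeping around the $\log t$ slack in the output size. Fortnow--Santhanam in its original form demands output size polynomial purely in $n$; to absorb the extra $\log t$ term one either invokes the strengthening developed alongside the cross-composition framework, or case-splits on the magnitude of $t$: for $t\le 2^{n^c}$ the term $\log t$ is dominated by $n^c$ and the classical statement applies directly, whereas for $t>2^{n^c}$ one iterates the construction, grouping the inputs into blocks of size $2^n$ and distilling in $O(\log t / n)$ rounds so that the output finally has size polynomial in $n$ alone, matching the standard incompressibility setup.
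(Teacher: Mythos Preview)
Your overall strategy is the same one the paper endorses: partition by the polynomial equivalence relation, cross-compose each class, kernelize, combine by OR, and invoke Fortnow--Santhanam. The paper does not spell out a proof of this particular theorem (it is cited from~\cite{cross-composition}), but it does give the full argument for the AND analogue (Theorem~\ref{thm:and-cross-composition}) and explicitly says it ``closely follow[s] the lines of the proof of Theorem~9 of~\cite{cross-composition}''. That proof matches your outline step for step, so you are on the right track.

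The one place where the paper's argument is cleaner than yours is the treatment of the $\log t$ slack. Rather than your case split and multi-round iteration, the paper simply observes that among $t$ strings of length at most $m$ there can be at most $(|\Sigma|+1)^m$ distinct ones; after discarding duplicates one has $t\le (|\Sigma|+1)^m$ and hence $\log t = O(m)$ outright. This kills the extra term in one line and lets you apply Fortnow--Santhanam in its original form with no further work. Your iterated-distillation idea can be made to work, but it is unnecessary machinery here.
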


Behind Theorem \ref{thm:cross-composition} stands the following result of
Fortnow and Santhanam \cite{fortnow:kernel}.

\begin{definition}[\cite{bodlaender:kernel}]\label{def:distillation}
A {\em{distillation}} algorithm for a problem~$L \subseteq \Sigma^*$ into a set~$L' \subseteq \Sigma^*$ is a polynomial-time
algorithm that given~$t$ strings~$x_1,x_2, \ldots, x_t$ outputs a string~$y \in \Sigma^\ast$
such that (1)~$y \in L'$ iff~$x_i \in L$ for some~$1 \leq i \leq t$; (2)~$|y|$ is bounded
polynomially in~$\max_{i=1}^t |x_i|$.
\end{definition}

\begin{theorem}[\cite{fortnow:kernel}, Theorem 1.2]\label{thm:fortnow}
An NP-complete language does not admit a distillation algorithm into an arbitrary
set unless \unlesscompass{}.
\end{theorem}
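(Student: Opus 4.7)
The plan is to prove the contrapositive: assuming the NP-complete language $L$ admits a distillation algorithm $D$ into some set $L'$, we will derive \unlesscompass{}. Since $L$ is NP-complete under polynomial-time Karp reductions and $\textrm{coNP}/\textrm{poly}$ is closed under such reductions, it suffices to show $L \in \textrm{coNP}/\textrm{poly}$; the containment then transfers to every language in $\textrm{NP}$.

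Fix an input length $n$ and choose the number $t$ of input instances to be a sufficiently large polynomial in $n$, so that the output length $p(n)$ of $D$ (which, by Definition~\ref{def:distillation}, depends only on $\max_i |x_i|$ and not on $t$) is vastly smaller than the total input length $tn$. Consider the behaviour of $D$ on tuples from $(\Sigma^n)^t$: on all-NO tuples it produces strings in $\overline{L'}$, while any tuple containing an $L$-member is sent into $L'$. The extreme compression $p(n) \ll tn$ forces many collisions on the output side, so that a comparatively small collection of ``popular'' output strings in $\overline{L'}$ already absorbs the images of most all-NO tuples.

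The heart of the argument is to convert this compression into a polynomial-size advice sequence $(S_n, Y_n)$ with $S_n \subseteq \overline{L} \cap \Sigma^n$ and $Y_n \subseteq \overline{L'}$, satisfying the following certification property: every $x \in \overline{L} \cap \Sigma^n$ admits a tuple with entries drawn from $S_n \cup \{x\}$ whose $D$-image lies in $Y_n$. Given such advice, a co-nondeterministic machine decides $\overline{L} \cap \Sigma^n$ by guessing the tuple, running $D$ in polynomial time, and checking membership in the explicitly supplied $Y_n$; crucially, $L'$ itself need not be decidable, which is why the theorem applies even when $L'$ is an arbitrary set.

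The main obstacle, and the technical core of the Fortnow--Santhanam argument, is establishing the existence of this polynomial-size advice. The standard route is an iterative, minimax-style selection of popular output strings, combined with a counting bound that either terminates after polynomially many rounds or else contradicts the compression ratio $p(n) \ll tn$; a conceptually cleaner alternative is Drucker's later information-theoretic framework. Once $L \in \textrm{coNP}/\textrm{poly}$ is in hand, the conclusion \unlesscompass{} follows, completing the proof.
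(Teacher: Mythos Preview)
The paper does not prove this theorem; it is stated with a citation to Fortnow and Santhanam and invoked as a black box. There is thus no proof in the paper to compare your attempt against.

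Your sketch does follow the genuine Fortnow--Santhanam strategy, but two details are off. First, in the standard argument the advice is only the set $Y_n\subseteq\overline{L'}$ of popular output strings; the remaining entries of the witnessing tuple are guessed as part of the \emph{certificate}, and the verifier neither checks nor needs to check that they lie in $\overline{L}$---soundness follows purely from the OR-semantics of $D$, since any tuple containing an $x\in L$ is mapped into $L'$ and hence outside $Y_n$. Demanding that the auxiliary entries come from a fixed polynomial-size $S_n\subseteq\overline{L}$ is a strictly stronger completeness claim that the usual halving argument does not deliver (the covering tuples produced at each round range over the entire current uncovered set, which may be exponential). Second, the machine you describe is a \emph{nondeterministic} acceptor for $\overline{L}$: it guesses a tuple and accepts if the image lands in $Y_n$, placing $\overline{L}\in\mathrm{NP}/\mathrm{poly}$ and hence $L\in\mathrm{coNP}/\mathrm{poly}$; calling it co-nondeterministic for $\overline{L}$ reverses the quantifier. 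Beyond these points, as you yourself acknowledge, the iterative covering/counting argument that produces a polynomial-size $Y_n$ is only gestured at, so what you have is an outline rather than a proof.
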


By replacing the OR operation in Definition \ref{def:distillation} by the AND operation
we obtain the AND-conjecture.
\begin{conjecture}[AND-conjecture \cite{bodlaender:kernel}]
A coNP-complete language does not admit a distillation algorithm into itself.
\end{conjecture}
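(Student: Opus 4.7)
The plan is to prove the AND-conjecture by mirroring the Fortnow--Santhanam strategy for OR-distillation, but with a more delicate probabilistic argument to handle the inherent asymmetry between AND and OR. Suppose for contradiction that some coNP-complete language $L$ admits a polynomial-time AND-distillation $D$ into a set $L'$, where $D$ maps $t$-tuples of length-$n$ strings to a single string whose length is bounded by a fixed polynomial in $n$, independent of $t$. The goal is to derive \unlesscompass{}, which by~\cite{nokernel-collapse2,nokernel-collapse1} forces a collapse of the polynomial hierarchy to its third level and therefore contradicts standard assumptions.

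The first step is to set up a compression scheme for the complement $\overline{L}$, which is NP-complete. Given a single instance $x$ of length $n$, I would pad it with $t-1$ auxiliary instances $y_1,\ldots,y_{t-1}$ drawn from a carefully chosen distribution on length-$n$ strings, and then apply $D$ to the $t$-tuple $(x,y_1,\ldots,y_{t-1})$. The output either witnesses that the whole tuple AND-belongs to $L$ or refutes it, and so it implicitly ``tests'' whether $x \in L$. The probabilistic core of the argument is then to show that, since $|D(\cdot)|$ is bounded while $t$ can be taken arbitrarily large, the output of $D$ cannot depend too strongly on any single coordinate; by an entropy/counting argument this should force the distribution of $D(x,y_1,\ldots,y_{t-1})$ to be statistically close to $D(x',y_1,\ldots,y_{t-1})$ for a ``generic'' $x' \in L$ whenever $x \in \overline{L}$. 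One would then convert this statistical indistinguishability into a short AM-style protocol for $\overline{L}$ with non-uniform advice, using Adleman's theorem to strip the randomness, and conclude that $\mathrm{NP} \subseteq \mathrm{coNP}/\mathrm{poly}$.

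The main obstacle is exactly this last step, and in my view it is the reason the statement is still only a conjecture. In the OR case, Fortnow and Santhanam exploit a clean pigeonhole: short output plus many disjoint inputs forces many tuples to share an output, and then one can ``swap in'' a genuine instance without the verifier noticing. In the AND case every NO coordinate is equally fatal, and the effect of the single special coordinate $x$ on $D$'s output is tangled with the effects of all of $y_1,\ldots,y_{t-1}$, so the symmetric pigeonhole breaks down. Making the statistical step rigorous appears to require genuinely information-theoretic reasoning about $D$ viewed as a noisy channel carrying one distinguished bit of information (whether $x \in L$) in the presence of many other correlated bits, together with an amplification trick to boost the statistical gap into a combinatorial one. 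I would therefore expect that closing the proof requires a substantially new probabilistic idea beyond the OR-distillation template, and I would target the weaker conclusion \unlesscompass{} rather than attempting a direct class collapse.
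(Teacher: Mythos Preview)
The statement you are addressing is the AND-conjecture, and the paper does \emph{not} prove it: it is explicitly stated as a conjecture, and the introduction even remarks that ``it is now a major open problem in the field of kernelization to support this claim with a proof based on a plausible complexity assumption.'' So there is no ``paper's own proof'' to compare against; the paper simply records the conjecture and then shows (Theorem~\ref{thm:and-cross-composition}) that an AND-cross-composition plus a polynomial kernel would refute it.

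Your proposal is consistent with this situation but is not itself a proof. You outline a Fortnow--Santhanam-style attack, correctly identify that the pigeonhole/swap argument that drives the OR case does not survive the switch to AND, and then state that closing the gap ``requires a substantially new probabilistic idea.'' That is an honest diagnosis of the obstacle, but it means your write-up is a discussion of why the problem is hard rather than a proof attempt with a concrete argument to check. In particular, the step where you pass from ``$D$'s output cannot depend too strongly on any single coordinate'' to an AM protocol with advice for $\overline{L}$ is exactly the place where the naive approach fails, and you have not supplied any mechanism to bridge it. So there is no error to point to, but there is also no proof: what you have written is, appropriately, a statement that the conjecture is open together with an informal explanation of why the OR template does not transfer.
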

This conjecture motivates us to define the AND variant of a cross-composition algorithm.
\begin{definition}[AND-cross-composition]
Let~$L \subseteq \Sigma^\ast$ and let~$Q \subseteq \Sigma^\ast \times \mathbb{N}$ be
a parameterized problem. 
We say that~$L$ {\em{AND-cross-composes}} into~$Q$ if there is a polynomial
equivalence relation~$\mathcal{R}$ and an algorithm which, given~$t$ strings~$x_1, x_2, \ldots x_t$
belonging to the same equivalence class of~$\mathcal{R}$, computes an instance~$(x^\ast,k^\ast) \in \Sigma^\ast \times \mathbb{N}$ in time polynomial in~$\sum_{i=1}^t |x_i|$
such that (1)~$(x^\ast,k^\ast) \in Q$ iff~$x_i \in L$ for {\bf each}~$1 \leq i \leq t$; (2)~$k^\ast$ is bounded polynomially in~$\max_{i=1}^t |x_i| + \log t$.
\end{definition}

Following the lines of the proof of Theorem 9 in \cite{cross-composition} we obtain the following result
(for sake of completeness we include the formal proof).
\begin{theorem}\label{thm:and-cross-composition}
If~$L \subseteq \Sigma^\ast$ is NP-complete under Karp reductions
and~$L$ AND-cross-composes into a parameterized problem~$Q$, which 
has a polynomial kernel and whose unparameterized variant
(i.e., with the parameter appended to the instance in unary) is in NP,
  then the AND-conjecture fails.
\end{theorem}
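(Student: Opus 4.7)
The plan is to mirror the proof of Theorem~\ref{thm:cross-composition}, but because the AND-conjecture requires distillation into the same language (rather than into an arbitrary target set as in Fortnow--Santhanam), I must fold the conjunction of the surviving instances back into a single instance of $L$ at the end.

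Given $t$ instances $x_1,\ldots,x_t$ of $L$, let $n=\max_i|x_i|$. Duplicate instances do not affect the value of an AND, so I may remove repetitions and assume $t \le |\Sigma|^n$, which gives $\log t = O(n)$. Using the polynomial equivalence relation $\mathcal{R}$ supplied by the AND-cross-composition, I partition the inputs into $r = n^{O(1)}$ classes $C_1,\ldots,C_r$. For each class $C_j$ I invoke the AND-cross-composition to produce a parameterized instance $(y_j,k_j)$ of $Q$ with $k_j=(n+\log t)^{O(1)}=n^{O(1)}$ such that $(y_j,k_j)\in Q$ iff every element of $C_j$ lies in $L$. Applying the assumed polynomial kernel for $Q$ to each $(y_j,k_j)$ then yields $(y_j',k_j')$ of size polynomial in $n$ with the same $Q$-membership. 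At this point $\bigwedge_i (x_i\in L)$ is equivalent to $\bigwedge_j ((y_j',k_j')\in Q)$, a conjunction of $r$ NP queries.

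Next I use the two remaining hypotheses. Since the unparameterized variant of $Q$ (with parameter written in unary) is in NP and $r$ is polynomially bounded in $n$, the language
\[
Q^{\wedge}=\{(z_1,\ldots,z_r):\text{each $z_j$ is a yes-instance of the unparameterized $Q$}\}
\]
is also in NP: a polynomial-size certificate is obtained by concatenating the individual NP-witnesses. Using NP-completeness of $L$ under Karp reductions, I apply a Karp reduction $Q^{\wedge}\le_p L$ to the tuple $((y_1',k_1'),\ldots,(y_r',k_r'))$ to obtain a single string $x^\ast$ of size polynomial in $n$ such that $x^\ast\in L$ iff all $x_i\in L$. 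This constitutes an AND-distillation of $L$ into $L$; by complementation it is an OR-distillation, in the sense of Definition~\ref{def:distillation}, of the coNP-complete language $\bar L$ into itself, contradicting the AND-conjecture.

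The hardest step is the last one. In the OR-version (Theorem~\ref{thm:cross-composition}) one appeals directly to Theorem~\ref{thm:fortnow}, which tolerates an arbitrary target, so it suffices to output the list of $r$ kernelized instances. Here the AND-conjecture insists the output land inside the starting language, which forces both the extra hypothesis that the unparameterized $Q$ is in NP and the careful bookkeeping that $r=n^{O(1)}$, so that $Q^{\wedge}$ stays in NP rather than slipping into coNP and can therefore be Karp-reduced back to $L$.
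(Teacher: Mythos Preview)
Your proof is correct and follows essentially the same route as the paper. The only cosmetic difference is that the paper makes the passage through NP explicit via SAT---it converts each kernelized instance $(y_j',k_j')$ into a Boolean formula $\phi_j$, forms $\phi=\bigwedge_j\phi_j$, and then Karp-reduces the SAT instance $\phi$ back to $L$---whereas you package the same step more abstractly by defining the conjunction language $Q^{\wedge}$ and reducing it directly to $L$; both arguments exploit the same two hypotheses (unparameterized $Q\in\mathrm{NP}$ and NP-hardness of $L$) in the same way.
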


\begin{proof}
In this proof we closely follow the lines of the proof of Theorem 9 of \cite{cross-composition}.

We show how the assumptions of the theorem lead to a distillation algorithm
for the coNP-complete language $\bar{L} = \Sigma^\ast \setminus L$.
Let $x_1,x_2,\ldots,x_t \in \Sigma^\ast$ and $m = \max_{i=1}^t |x_i|$.
First note that if $t > (|\Sigma|+1)^m$, then there are duplicates in the input instances
and we may remove them. Thus for the rest of the proof we assume that $t \leq (|\Sigma|+1)^m$,
in particular $\log t = O(m)$.

Using the polynomial equivalence relation $\prel$ assumed in the definition
of the AND-cross-composition, in polynomial time we partition the strings $x_i$
into $r$ equivalence classes $X_1,X_2,\ldots,X_r$. Note that $r$ is bounded polynomially
in $m$.

For each class $X_j$ we apply the assumed AND-cross-composition on the strings in
$X_j$, obtaining an instance $(y_j,k_j)$ of the parameterized problem $Q$.
We then apply the assumed kernelization algorithm to the instance $(y_j,k_j)$,
obtaining $(y_j',k_j')$. We note that $|y_j'| + k_j'$ is bounded polynomially
in $k_j$, which is bounded polynomially in $m$.
We infer that the total size of all instances $(y_j',k_j')$ for $1 \leq j \leq r$
is bounded polynomially in $m$.

As the unparameterized version of $Q$ is in NP, we may transform each
instance $(y_j',k_j')$ into a boolean formula $\phi_j$ of size polynomial in $m$
such that $\phi_j$ is satisfiable iff $(y_j',k_j') \in Q$.
Let $\phi = \bigwedge_{j=1}^r \phi_j$. Note that $|\phi|$ is bounded polynomially
in $m$ and $\phi$ is satisfiable if and only if $x_i \in L$ for all $1 \leq i \leq t$.
As $L$ is NP-complete, in polynomial time we may transform $\phi$ into an equivalent
instance $x$ of the language $L$, where $|x|$ is bounded polynomially in $m$.
We conclude by noting that we obtained a distillation algorithm for $\bar{L}$:
$x \in \bar{L}$ iff $x_i \notin L$ for some $1 \leq i \leq t$, that is,
$x_i \in \bar{L}$.
\end{proof}

Observe that any polynomial equivalence relation is defined on all
words over the alphabet~$\Sigma$ and for this reason
whenever we define a cross-composition,
we should also define how the relation behaves on words
that do not represent instances of the problem.
In all our constructions the defined relation puts all malformed instances
into one equivalence class, and the corresponding cross-composition
outputs a trivial NO-instance, given a sequence of malformed instances.
Thus, in the rest of this paper, we silently ignore the existence of malformed instances.

\newcommand{\cC}{\mathcal{C}}

\section{Clique Cover}\label{sec:clique-cover-full}

\defproblemnoparam{\cliquecover}{An undirected graph $G$ and an integer $k$.}
{Does there exist a set of $k$ subgraphs of $G$, such that
each subgraph is a clique and each edge of $G$ is contained in at least one of these subgraphs?}

In this section we present both the cross-composition and the AND-cross-composition of \cliquecover{} parameterized by $k$.
We start with the AND-cross-composition since the construction we present is also used in the cross-composition.

\subsection{AND-cross-composition}

\begin{theorem}
\label{thm:clique-and-cross-full}
\cliquecover AND-cross-composes to \cliquecover parameterized by $k$.
\end{theorem}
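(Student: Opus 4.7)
The plan is to first fix the polynomial equivalence relation $\prel$: two \cliquecover instances are declared equivalent iff they have the same number of vertices $n$ and the same parameter value $k$, with all malformed inputs grouped into one extra class that is mapped to a trivial no-instance. Since both $n$ and $k$ are bounded by the input size, this relation has polynomially many classes, as required.

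Given $t$ equivalent instances $(G_1,k),\ldots,(G_t,k)$, with vertex sets identified with disjoint copies $V_i=\{v_1^i,\ldots,v_n^i\}$, the composition will produce an output instance $(G^\ast,k^\ast)$ as follows: keep the edges of each $G_i$ inside $V_i$, and add on top a carefully designed ``gluing gadget'' of size polynomial in $n$ that introduces inter-copy adjacencies (together with a small number of auxiliary vertices). The target parameter is $k^\ast=k+p(n)$ for some polynomial $p$ independent of $t$. The design principle, inspired by iterative compression, is to make each clique of $G^\ast$ restrict to a clique of $G_i$ on every copy $V_i$, so that a size-$k$ cover of each $G_i$ can be fused slot-wise into $k$ cliques of $G^\ast$, while any cover of $G^\ast$ of size $k^\ast$ can be shown to split into at most $k$ such ``parallel'' cliques plus $p(n)$ purely gadget-related cliques.

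For the forward direction I would take $k$-covers $\{D_s^i\}_{s=1}^k$ of each $G_i$ and output the $k$ cliques $D_s^\ast=D_s^1\cup\cdots\cup D_s^t$ in $G^\ast$---the gadget is arranged so that every such slot-wise union is itself a clique---augmented by $p(n)$ prefabricated cliques from the gadget that cover all inter-copy edges not yet covered; this gives a cover of $G^\ast$ of size $k+p(n)=k^\ast$. For the reverse direction I would prove a rigidity lemma saying that every $k^\ast$-cover of $G^\ast$ can be rearranged so that all but $p(n)$ of its cliques have the parallel shape $C_1^s\cup\cdots\cup C_t^s$, with each $C_i^s$ a clique of $G_i$; restricting these to any $V_i$ then delivers a $k$-cover of $G_i$ directly.

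The main obstacle I expect is engineering the gluing gadget so that both directions hold at once. The gadget must be permissive enough that every slot-wise union of per-instance cliques remains a clique of $G^\ast$ (otherwise the forward direction overshoots $k^\ast$), and yet rigid enough to enforce the rearrangement promised by the rigidity lemma. The subtlety is concentrated in the inter-copy edges $v_p^i v_q^j$ with $p\neq q$: they must be coverable at cost at most $p(n)$ irrespective of $t$, which forces the gadget to expose large prefabricated cliques that simultaneously handle all such crossings. A subsidiary step is to verify that the (possibly restricted) source variant of \cliquecover on which the composition is based remains NP-complete under Karp reductions, so that Theorem~\ref{thm:and-cross-composition} can be invoked on the composed output.
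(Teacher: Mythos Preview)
Your overall shape---grouping instances by $(n,k)$, making all inter-copy pairs adjacent so that slot-wise unions of per-instance cliques are cliques of $G^\ast$, and then arguing that any cover of $G^\ast$ projects to covers of each $G_i$---matches the paper. However, your parameter budget $k^\ast=k+p(n)$ with $p$ \emph{independent of $t$} cannot work, and this is where the proposal breaks.

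Take all $G_i$ to be edgeless; then $k=0$ and your plan asks for $p(n)$ cliques covering every inter-copy edge. But with edgeless $G_i$, every clique of $G^\ast$ meets each part $V_i$ in at most one vertex, so restricted to $\bigcup_i V_i$ your $p(n)$ cliques form a strength-$2$ covering array with $t$ columns over an $n$-symbol alphabet. It is classical (Kleitman--Spencer / R\'enyi already for $n=2$) that such an array must have at least $\Omega(\log t)$ rows; in particular no bound depending on $n$ alone suffices once $t$ is large. A handful of auxiliary gadget vertices does not help: an inter-copy edge $v^i_pv^j_q$ still needs a clique containing both endpoints, and the restriction of that clique to $\bigcup_i V_i$ is still a partial transversal. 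So the ``prefabricated cliques at cost $p(n)$ irrespective of $t$'' that you ask for simply do not exist.

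The paper exploits precisely the $\log t$ slack that the cross-composition definition grants. It sets $k^\ast=k+n^2\log_2 t$ and realises the covering array explicitly: after padding so that $n$ and $t$ are powers of two, it introduces an independent set $W=\{w(a,b,r):0\le a,b<n,\ 0\le r<\log_2 t\}$ where $w(a,b,r)$ is adjacent to the single vertex $v^i_j$ with $j\equiv a+b\lfloor i/2^r\rfloor\pmod n$ in each $V_i$. A short number-theoretic check shows that for every pair $v^i_x,v^j_y$ with $i\neq j$ some $w(a,b,r)$ sees both, so the $|W|=n^2\log_2 t$ closed neighbourhoods $N[w]$ cover all inter-copy edges. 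For the reverse direction the paper does not need any delicate ``rigidity lemma'': since $W$ is independent and non-isolated, any cover of $G^\ast$ spends at least $|W|$ cliques on $W$-incident edges; each such clique, having a vertex in $W$, meets every $V_i$ in at most one vertex and hence covers no intra-$G_i$ edge. The remaining $\le k$ cliques, intersected with each $V_i$, give the desired covers of the $G_i$. This replaces your rearrangement argument entirely.

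Finally, there is no ``restricted source variant'' to certify here: the source language is \cliquecover itself, which is already NP-complete, so Theorem~\ref{thm:and-cross-composition} applies directly.
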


\begin{proof}
For the equivalence relation $\mathcal{R}$ we take a relation
that puts two instances $(G_1,k_1)$, $(G_2,k_2)$ of \cliquecover are 
in the same equivalence class iff $k_1=k_2$ and
the number of vertices in $G_1$ is equal to the number of vertices in $G_2$.
Therefore, in the rest of the proof we assume that we are given a sequence $(G_i,k)_{i=0}^{t-1}$
of \cliquecover instances that are in the same equivalence class of $\mathcal{R}$
(to avoid confusion we number everything starting from zero in this proof).
Let $n$ be the number of vertices in each of the instances.
W.l.o.g. we assume that $n=2^{h_n}$ for a positive integer $h_n$, since otherwise we may add isolated vertices to each instance.
Moreover, we assume that $t=2^{h_t}$ for some positive integer $h_t$, since we may copy some instance if needed,
while increasing the number of instances at most two times.

Now we construct an instance $(G^\ast,k^\ast)$, where $k^\ast$ is polynomial in $n+k+h_t$.
Initially as $G^\ast$ we take a disjoint
union of graphs $G_i$ for $i=0,\ldots,t-1$ with added edges between every pair of vertices
from $G_a$ and $G_b$ for $a\not=b$.
Next, in order to cover all the edges between different instances with few cliques
we introduce the following construction.
Let us assume that the vertex set of $G_i$ is $V_i=\{v^i_0,\ldots,v^i_{n-1}\}$.
For each $0 \le a < n$, for each $0 \le b < n$ and for each $0 \le r < h_t$
we add to $G^\ast$ a vertex $w(a,b,r)$ which is adjacent to exactly one vertex in each $V_i$,
that is $v^i_j$ where $j=(a+b\lfloor\frac{i}{2^r}\rfloor) \bmod n$.
By $W$ we denote the set of all added vertices $w(a,b,r)$.
As the new parameter $k^\ast$ we set $k^\ast=|W|+k=n^2h_t+k$.
Note that $W$ is an independent set in $G^\ast$ and, moreover, each vertex in $W$ is non-isolated.

Let us assume that for each $i=0,\ldots,t-1$ the instance $(G_i,k)$ is a YES-instance.
To show that $(G^\ast,k^\ast)$ is a YES-instance we create a set $\cC$ of $k^\ast$ cliques.
We split all the edges of $G^\ast$ into the following groups:
(i) edges incident to vertices of $W$, (ii) edges between two different graphs $G_i$,$G_j$
and (iii) edges in each graph $G_i$.
For each vertex $w \in W$ we add to $\cC$ the subgraph $G^\ast[N[w]]$, which is a clique since
every two vertices from two different graphs $G_i,G_j$ are adjacent.
Moreover, let $\cC_i=\{C^i_0,\ldots,C^i_{k-1}\}$ be any solution for the instance $(G_i,k)$.
For each $\ell=0,\ldots,k-1$ we add to $\cC$ a clique $G^\ast\left[\bigcup_{i=0}^{t-1} C^i_\ell\right]$.
Clearly all the edges mentioned in (i) and (iii) are covered.
Consider any two vertices $v^i_x \in V_i$ and $v^j_y \in V_j$ for $i<j$.
Let $r$ be the greatest integer such that $(j-i)$ is divisible by $2^r$.
Note that $0 \le r < h_t$ and $z=\lfloor\frac{j}{2^r}\rfloor-\lfloor\frac{i}{2^r}\rfloor \equiv 1 \pmod 2$
since otherwise $(j-i)$ would be divisible by $2^{r+1}$.
Consequently, there exists $0 \le b < n$ satisfying the congruence $bz\equiv y-x \pmod n$,
since the greatest common divisor of $z$ and $n$ is equal to one (recall that $n$ is a power of $2$).
Therefore, when we set $a=y-b\lfloor\frac{j}{2^r}\rfloor$ we obtain
\begin{align*}
a+b\big\lfloor\frac{i}{2^r}\big\rfloor & \equiv b(\big\lfloor\frac{i}{2^r}\big\rfloor-\big\lfloor\frac{j}{2^r}\big\rfloor)+y\equiv y-bz\equiv x \pmod n \\
a+b\big\lfloor\frac{j}{2^r}\big\rfloor & \equiv y \pmod n
\end{align*}
and both $v^i_x,v^j_y$ belong to the clique of $\cC$ containing the vertex $w(a,b,r)$.

Now let us assume that $(G^\ast,k^\ast)$ is a YES-instance and let $\cC$ be a set of at most $k^\ast$ cliques in $G^\ast$ that cover every edge in $G^\ast$.
We define $\cC' \subseteq \cC$ as the set of these cliques in $\cC$ which contain at least two vertices
from some set $V_i$.
Since $W$ is an independent set in $G^\ast$, edges incident to two different vertices in $W$ need to be covered by two different cliques in $\cC$.
Moreover, no clique in $\cC'$ contains a vertex from $W$, because each vertex in $W$ is incident to
exactly one vertex in each $V_i$.
Therefore, $|\cC'| \le |\cC|-|W| \le k$ and a set $\cC_i=\{X\cap V_i : X \in \cC' \}$ for $i=0,\ldots,t-1$ is a solution for $(G_i,k)$, as
no clique in $\cC \setminus \cC'$ covers an edge between two vertices in $V_i$ for any $i=0,\ldots,t-1$.
Hence each instance $(G_i,k)$ is a YES-instance.
\end{proof}

As a consequence, by Theorem~\ref{thm:and-cross-composition} we obtain the following result.

\begin{corollary}
There is no polynomial kernel for the \cliquecover problem parameterized by $k$ unless the AND-conjecture fails.
\end{corollary}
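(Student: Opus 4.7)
The plan is to build a single instance $(G^\ast, k^\ast)$ that is essentially the disjoint union of the $G_i$, plus all possible inter-instance edges, plus a carefully designed set $W$ of auxiliary ``selector'' vertices whose purpose is to force the structure of any clique cover. For the equivalence relation $\mathcal{R}$ I would simply group instances by having the same parameter $k$ and the same number of vertices $n$; by padding with isolated vertices and by duplicating instances, I may assume $n = 2^{h_n}$ and $t = 2^{h_t}$, which is harmless for AND-composition since duplicating a YES-instance keeps the conjunction unchanged.

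The construction of $G^\ast$ starts from the disjoint union of the $G_i$ in which I additionally make every vertex of $V_i$ adjacent to every vertex of $V_j$ for $i \neq j$. Writing $V_i = \{v^i_0, \ldots, v^i_{n-1}\}$, I then add, for each triple $(a,b,r)$ with $0 \le a,b < n$ and $0 \le r < h_t$, a vertex $w(a,b,r)$ whose neighbors are exactly the vertices $v^i_{(a + b \lfloor i/2^r \rfloor) \bmod n}$ for $i = 0, \ldots, t-1$. I set $k^\ast = |W| + k = n^2 h_t + k$, which is polynomial in $n + k + \log t$ as required.

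For the forward direction, assuming each $(G_i, k)$ is a YES-instance with cover $\mathcal{C}_i = \{C^i_0, \ldots, C^i_{k-1}\}$, I would produce $k$ ``horizontal'' cliques $\bigcup_i C^i_\ell$ in $G^\ast$ (these are cliques because all cross-edges between different $V_i$ are present), together with one clique $G^\ast[N[w]]$ per vertex $w \in W$. This covers all edges inside each $V_i$ and all edges incident to $W$; what remains is the main obstacle, namely showing that every cross-edge $v^i_x v^j_y$ is covered by some clique $G^\ast[N[w(a,b,r)]]$. For this I would pick $r$ maximal with $2^r \mid (j-i)$, set $z = \lfloor j/2^r \rfloor - \lfloor i/2^r \rfloor$, and observe that by maximality of $r$, $z$ is odd, hence coprime to $n = 2^{h_n}$; therefore the congruence $bz \equiv y - x \pmod{n}$ can be solved for $b$, and then $a := y - b\lfloor j/2^r \rfloor \bmod n$ makes $w(a,b,r)$ adjacent to both $v^i_x$ and $v^j_y$.

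For the reverse direction, suppose $\mathcal{C}$ is a cover of $G^\ast$ with at most $k^\ast$ cliques. Since $W$ is independent and each $w \in W$ has positive degree, the edges incident to distinct vertices of $W$ must lie in distinct cliques of $\mathcal{C}$, consuming $|W|$ of the budget. Moreover, because every $w$ has exactly one neighbor in each $V_i$, no clique containing a $W$-vertex can cover an edge inside any $V_i$. Letting $\mathcal{C}' \subseteq \mathcal{C}$ be the cliques containing at least two vertices from some $V_i$, I get $|\mathcal{C}'| \le k$, and projecting $\mathcal{C}'$ onto each $V_i$ yields a clique cover of $G_i$ with at most $k$ cliques, so every $(G_i,k)$ is a YES-instance. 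The main subtle point throughout is the modular-arithmetic argument in the forward direction, which is what motivates choosing $n$ to be a power of two and using the binary layering $\lfloor i/2^r \rfloor$ in the definition of $W$.
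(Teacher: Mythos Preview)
Your proposal is correct and essentially identical to the paper's approach: you reproduce exactly the AND-cross-composition of Theorem~\ref{thm:clique-and-cross-full} (same equivalence relation, same $W$-gadget with the adjacency rule $j=(a+b\lfloor i/2^r\rfloor)\bmod n$, same modular-arithmetic coverage argument, and same counting argument for the reverse direction). The only thing you leave implicit is the final step of invoking the AND-cross-composition framework (Theorem~\ref{thm:and-cross-composition}) to conclude the nonexistence of a polynomial kernel, which is exactly how the paper derives the corollary.
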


\subsection{Cross-composition}

In this section we show cross-composition to \cliquecover, which we obtain
by extending the AND-cross-composition gadgets from the previous section.

\defproblemnoparam{\icliquecover}{An undirected graph $G$, an integer $k$
  and a set $\cC$ of $k+1$ cliques in $G$ covering all edges of $G$.}
{Does there exist a set of $k$ subgraphs of $G$, such that
each subgraph is a clique and each edge of $G$ is contained in at least one of the subgraphs?}

\begin{lemma}
\icliquecover is NP-complete with respect to Karp's reductions.
\end{lemma}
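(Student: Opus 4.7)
Membership in NP is immediate: a $k$-clique cover, presented as $k$ subsets of $V(G)$, is a polynomial-size certificate verifiable in polynomial time by checking that each subset induces a clique of $G$ and that the union of their edges equals $E(G)$.

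For NP-hardness, I would give a Karp reduction from \cliquecover. Given an instance $(G, K)$, write $c(G)$ for its clique cover number. The plan is to build in polynomial time an equivalent instance $(G', K', \mathcal{C}')$ of \icliquecover, where the promised $(K'+1)$-clique cover $\mathcal{C}'$ must be produced explicitly by the reduction. The natural approach takes $G' = G \sqcup H$ for a padding graph $H$ with known clique cover number $c(H)$, sets $K' = K + c(H)$ so that the disjoint-union identity $c(G \sqcup H) = c(G) + c(H)$ yields the equivalence $(G, K) \in \cliquecover \Leftrightarrow (G', K') \in \cliquecover$, and constructs $\mathcal{C}'$ by combining a trivial cover of $G$ (its edges as single-edge cliques) with a suitable cover of $H$.

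The hard part is the parameter arithmetic. Using the trivial edge cover of $G$ forces the $H$-part of $\mathcal{C}'$ to have size $K + c(H) + 1 - |E(G)|$, which must be at least $c(H)$; this in turn forces $|E(G)| \le K + 1$, which is the polynomial-time decidable case of \cliquecover. To escape this obstruction, the padding should be made non-disjoint, letting $H$ share vertices with $G$ so that cliques in $\mathcal{C}'$ can span both parts and effectively compress the cover of $G$ using the padding structure. Alternatively, one can reduce from 3-SAT directly via a tailored construction guaranteeing $c(G') \le K' + 1$ always, so that an explicit $(K'+1)$-cover is built in to the reduction. The key technical content is then verifying both the equivalence of instances and the polynomial-time constructibility of $\mathcal{C}'$ without implicitly solving an instance of \cliquecover.
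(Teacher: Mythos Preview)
Your proposal is not a proof: it is a plan that correctly identifies the main obstacle (producing an explicit $(K'+1)$-cover without implicitly solving \cliquecover) but never overcomes it. Both escape routes you suggest---non-disjoint padding and a direct 3-SAT reduction---are left as bare hints with no construction, no parameters, and no verification. In particular, ``letting $H$ share vertices with $G$ so that cliques in $\mathcal{C}'$ can span both parts'' does not by itself yield a polynomial-time constructible $(K'+1)$-cover of $G'$; any such cover must in particular cover all edges of $G$, and you have not explained how sharing vertices lets you do that with fewer than $|E(G)|$ cliques without knowing anything about the clique structure of $G$. The proposal therefore has a genuine gap at exactly the point it names as ``the key technical content.''

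The paper avoids this difficulty by choosing a source problem with a built-in ``off-by-one'' structure: it reduces from $3$-\colouring{} of $4$-regular planar graphs. By Brooks' theorem such a graph is always $4$-colourable, and a $4$-colouring can be found in polynomial time; whether a $3$-colouring exists is NP-hard. The reduction builds a graph $G'$ and an integer $k$ so that (i) covering a certain set of ``hard'' edges of $G'$ with $\ell$ cliques is equivalent to $\ell$-colouring the input graph, and (ii) all remaining edges are handled by $|W|$ forced cliques of the form $N_{G'}[w]$ for an independent set $W$, with $k=|W|+3$. The efficiently computed $4$-colouring then yields the required explicit $(k+1)$-clique cover, while a $k$-cover exists iff the input is $3$-colourable. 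This is precisely the ``tailored construction guaranteeing $c(G')\le K'+1$ always'' that you gestured at, but the crucial idea you are missing is to start from a problem where the $(K'+1)$-version is \emph{provably} and \emph{efficiently} solvable for structural reasons (here, Brooks' theorem), rather than hoping to manufacture a cover of an arbitrary \cliquecover{} instance.
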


\begin{proof}
Clearly \icliquecover is in NP.

To prove that \icliquecover is NP-hard we show a reduction from  $3$-\colouring of $4$-regular planar graphs, which is NP-hard by~\cite{dailey}.
Let a $4$-regular planar graph $G=(V,E)$ be an instance of $3$-\colouring.
By Brooks theorem we know that $G$ is $4$-colourable (since by planarity, $G$ has no connected component isomorphic to $K_5$)
and we may find $4$-colouring of $G$ in polynomial time~\cite{brooks}.
Let $\bar{G}=(V,\bar{E})$ be the complement of $G$, that is an edge $e$ is in $\bar{E}$ iff $e$ does not belong to $E$.
To construct the graph $G'$ as the set of vertices we take two copies of $V$, namely $V_1 = \{v_1 : v\in V\}$, $V_2=\{v_2 : v \in V\}$.
For each edge $uv \in \bar{E}$ we add to $G'$ four vertices $w^{p,q}_{uv}$ for $1\leq p,q\leq 2$ and edges $w^{p,q}_{uv}u_p$, $w^{p,q}_{uv}v_q$, $u_pv_q$.
By $W$ we denote the set of all vertices $w^{p,q}_{uv}$ in $G'$.
Finally, for each $v \in V$ we add to $G'$ an edge $v_1v_2$ and set $k=|W|+3=4|\bar{E}|+3$.

In order to make $(G',k)$ a proper instance of \icliquecover we need also to construct a set $\cC$ of $k+1$ cliques covering all edges of $G'$.
Observe that to cover edges incident to vertices of $W$ we need at least $|W|$ cliques since $W$ is an independent set in $G'$.
Moreover, for each $w \in W$ the set $N_{G'}[w]$ is a clique in $G'$; hence w.l.o.g. any set of cliques covering all edges of $G'$
contains $|W|$ cliques of the form $N_{G'}[w]$ for $w \in W$ and those $|W|$ cliques cover all the edges of $G'$ except for $E'=\{v_1v_2 : v\in V\}$.
Note that to cover two different edges $u_1u_2,v_1v_2 \in E'$ we need $u_1$ and $v_2$ to be adjacent in $G'$, that is, non-adjacent in $G$.
Hence covering $E'$ with $l$ cliques is equivalent to colouring $G$ in $l$ colours. Since $G$ is $4$-colourable in an efficient way,
we can construct a set $\cC$ of $k+1$ cliques covering $G'$ obtaining an instance of \icliquecover, 
which is a YES-instance iff $G$ is $3$-colourable.
\end{proof}

Now the goal is to adjust the construction from the proof of Theorem \ref{thm:clique-and-cross-full} in order to obtain a classical cross-composition of \icliquecover into \cliquecover. Observe that we cannot easily relax the assumption that the clique cover of size $k+1$ is given in the input to just promising its existence, as the composition algorithm needs to be able to distinguish malformed instances from well-formed in the first place, which would not be the case unless $P=NP$. Moreover, the \icliquecover problem is trivially NP-hard with respect to Turing reductions; however, in order to make the composition work we need NP-completeness in Karp's sense.

\begin{theorem}
\label{thm:clique-cross-full}
\icliquecover cross-composes to \cliquecover parameterized by $k$.
\end{theorem}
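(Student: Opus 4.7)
The plan is to extend the AND-cross-composition of Theorem~\ref{thm:clique-and-cross-full}, exploiting the auxiliary input $\cC_i = \{C^i_0, \ldots, C^i_k\}$ (a $(k+1)$-clique cover) that accompanies each \icliquecover{} instance in order to turn the AND-semantics into OR-semantics. The polynomial equivalence relation $\prel$ would refine the one from Theorem~\ref{thm:clique-and-cross-full} by additionally requiring matching sequences of clique sizes $|C^i_0|, \ldots, |C^i_k|$ in the given covers (under a canonical ordering), so that the $\ell$-th cliques across different instances can be ``aligned'' and combined through the between-instance edges.

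The backbone of $G^\ast$ would again be the AND-composition graph --- the union of the $V_i$'s with all cross-instance edges together with the $W$-vertex gadget yielding $|W| = n^2 h_t$ star-cliques --- augmented by per-instance gadgets of size polynomial in $n+k$ that encode the given cover $\cC_i$. The target parameter would be set to $k^\ast = |W| + k$, exactly one less than the ``trivial'' budget $|W|+k+1$ that suffices using the given $(k+1)$-covers of all instances combined index-by-index.

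For the forward direction, if some $G_{i^\ast}$ admits a $k$-clique cover $\{D_0, \ldots, D_{k-1}\}$, the proposed covering would consist of the $|W|$ star cliques on the $W$-vertices together with $k$ merged cliques: for each $\ell<k$, combine $D_\ell$ with the given cliques $C^i_\ell$ of all $i \neq i^\ast$, while the per-instance gadgets absorb each ``extra'' clique $C^i_k$ ($i \neq i^\ast$) into these merged cliques --- this absorption is possible precisely because $V_{i^\ast}$ contributes only $k$ cliques instead of $k+1$, freeing room in the merged cliques. For the backward direction, an argument analogous to the soundness analysis in Theorem~\ref{thm:clique-and-cross-full} would show that any cover of size $k^\ast$ must use $|W|$ cliques on the pairwise-independent stars of $W$, leaving only $k$ cliques to cover the within-instance edges together with the gadget edges; the gadgets are then engineered so that the only consistent way to do this with $k$ cliques is for the restriction to $V_i$ of these cliques to form a genuine $k$-clique cover of $G_i$ for at least one index $i$.

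The main obstacle, and the technical heart of the proof, is designing the per-instance gadget that simultaneously (i) has size polynomial in $n+k$, keeping $k^\ast$ polynomial in $\max_i |x_i| + \log t$, (ii) encodes the given $(k+1)$-cover so that the ``extra'' clique per non-selected instance can be absorbed into a neighbouring merged clique iff some instance is YES, and (iii) does not permit spurious clique covers that save a clique without any $G_i$ admitting a $k$-cover. Breaking the AND-symmetry inherent to the basic between-instance construction to obtain OR-semantics --- while keeping the composition polynomially bounded in the right parameters --- is the delicate step that presumably motivates passing through the compression formulation \icliquecover{} in the first place, where a $(k+1)$-cover is available as a guaranteed fallback for every instance.
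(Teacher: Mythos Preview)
Your plan heads in a direction that does not work. With budget $k^\ast=|W|+k$, per-instance gadgets cannot break the AND-semantics of the base construction, and if they require their own cliques the budget grows by $\Omega(t)$. Concretely: each of the $|W|$ star cliques touches a single vertex of each $V_i$, so it covers no edge inside any $G_i$; hence the remaining $k$ cliques, restricted to $V_i$, must form a $k$-clique cover of $G_i$ for \emph{every} $i$ simultaneously. A gadget vertex attached to $C^i_k$ does not help: any clique through it that covers an edge of $C^i_k$ still projects to a clique of $G_i$, so it is an \emph{additional} clique on $V_i$, not an ``absorption'' into one of the $k$ merged ones (unless $C^i_k\cup C^i_\ell$ were already a clique in $G_i$, which cannot be assumed). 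The ``freeing room'' intuition is false --- replacing $C^{i^\ast}_\ell$ by $D_\ell$ in the $\ell$-th merged clique changes nothing in the $V_i$-slot for $i\neq i^\ast$. You correctly flag gadget design as the crux, but the per-instance architecture is the wrong shape for it.

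The paper's construction is different in kind: it adds $h_t=\log_2 t$ \emph{global} bit-selector gadgets $D_0,\ldots,D_{h_t-1}$ (not $t$ per-instance ones), each a $K_6$ minus a perfect matching, with halves $L_j$ and $R_j$. All of $L_j$ is made adjacent to every $V_i$ whose index has $j$-th bit $0$, and $R_j$ to those with bit $1$; an auxiliary independent set $S$ of $6nh_t$ star-type vertices handles these cross edges. Covering the $12$ edges of $D_j$ costs exactly $4$ cliques, and a normalization lemma (Lemma~\ref{lem:clique-nice-form-full}) shows one may assume three of them meet both $L_j$ and $R_j$ while exactly one is a one-sided triangle, say $L_j$; that triangle can then be extended to contain $C^i_k$ for every instance on the $L_j$ side. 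The $h_t$ binary side-choices thus grant a free $(k{+}1)$-st clique to every instance except the unique $i_0$ whose bit pattern is never selected, and that instance alone must admit a genuine $k$-cover. The resulting budget $k'=|W|+|S|+4h_t+k=n^2h_t+6nh_t+4h_t+k$ is polynomial in $n+k+\log t$. No refinement of $\prel$ by clique sizes is needed; the paper keeps the same $\prel$ as in Theorem~\ref{thm:clique-and-cross-full}.
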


\begin{proof}
We define the polynomial equivalence relation $\mathcal{R}$
in exactly the same way as in the proof of Theorem~\ref{thm:clique-and-cross-full},
that is we group instances according to their number of vertices and the
value of $k$.
Thus in the rest of the proof we assume we are given a sequence $(G_i,k,\cC_i)_{i=0}^{t-1}$
of \icliquecover instances that are in the same equivalence class of $\mathcal{R}$.
As in the proof of Theorem~\ref{thm:clique-and-cross-full}
we let $n$ be the number of vertices in each of the instances
and we assume $n=2^{h_n}$ and $t=2^{h_t}$.

Before we proceed to the proof let us give some intuition on what
follows. We would like to use the construction from Theorem~\ref{thm:clique-and-cross-full}
and extend it by adding exactly $h_t$ gadgets.
We show that any solution w.l.o.g. behaves in only one of two possible ways in every gadget.
Intuitively, each choice for the $j$-th gadget relaxes the constraint of using only $k$ cliques for half of the instances.
That is, choosing behaviour $b$, for $b=0,1$, allows using $k+1$ cliques, which are always sufficient by solution $\cC_i$ given as a part of the input,
for all instances with the $j$-th bit of the instance number equal to $b$.
Hence there is exactly one instance which is not relaxed by any of the $h_t$ gadgets,
so intuitively the gadgets may be viewed as an instance selector from $t$ instances.

\paragraph{Construction}
We create the instance of clique cover $(G^\ast,k^\ast)$ as in the proof of 
Theorem~\ref{thm:clique-and-cross-full}.
To obtain an instance $(G',k')$ we 
set $G'$ as $G^\ast$ and for each $j=1,\ldots,h_t$ we add to $G'$ a gadget $D_j$ containing exactly $6$ vertices
$V(D_j)=\{d^L_{j,1},d^L_{j,2},d^L_{j,3},d^R_{j,1},d^R_{j,2},d^R_{j,3}\}$ and $12$ edges $\binom{V(D_j)}{2} \setminus \{d^L_{j,r}d^R_{j,r} : 1 \le r \le 3\}$.
In other words, $D_j$ is a clique with a perfect matching removed (see Fig.~\ref{fig:edge-clique-cover-full}).
Let $V^L_j$ be the union of all sets $V_i=\{v^i_a : 0 \le a < n\}$ (recall that $V_i=V(G_i)$ is the set of vertices of the $i$-th instance) such that the $j$-th bit of the number $i$ written in binary is equal to zero,
whereas similarly $V^R_j$ is the set of vertices of all instances having the $j$-th bit of their number equal to one.
We make each vertex of $L_j=\{d^L_{j,r} : 1 \le r \le 3\}$ adjacent 
to each vertex of $V^L_j$ and we make each vertex of 
$R_j=\{d^L_{j,r} : 1 \le r \le 3\}$ adjacent 
to each vertex of $V^R_j$ in $G'$.
Finally, in order to allow easy coverage of the edges between $V(D_j)$ and $V^L_j \cup V^R_j$,
for each $0\le a < n$, $1 \le r \le 3$, and $Z \in \{L,R\}$ we add to $G'$ a vertex $s(a,r,Z)$ adjacent to each vertex in $\{v^i_a \in V^Z_j : 0 \le i < t\} \cup \{d^Z_{j,r}\}$.
Let $S$ be the set of all added vertices $s(a,r,Z)$.
As the parameter we set $k'=k^\ast + |S|+4h_t=n^2h_t+6nh_t+4h_t+k$.

\begin{figure}[htp]\label{fig:edge-clique-cover-full}
\begin{center}
\includegraphics[scale=1]{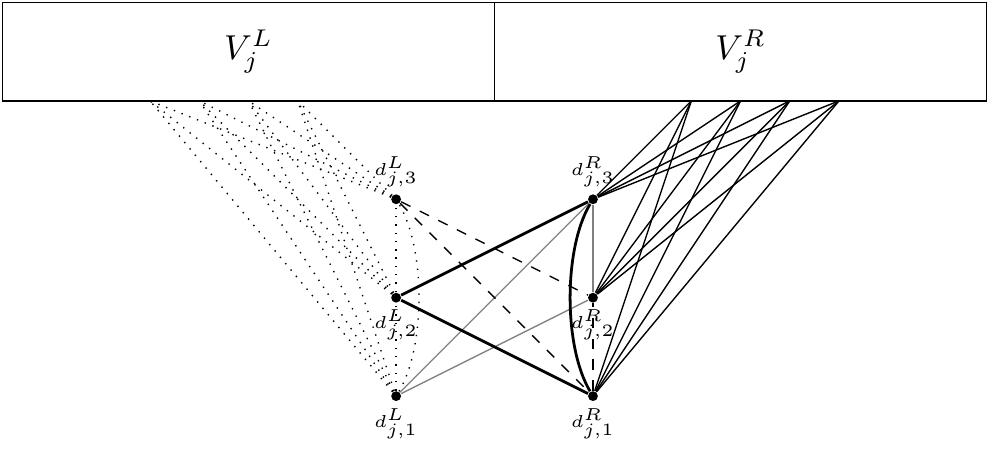}
\caption{The gadget $D_j$ set to relax the left part. Different styles of picturing the edges (thick, dotted, dashed or grayed) indicate, to which of the four cliques constructed for $D_j$ the edge belongs to.}
\end{center}
\end{figure}

\paragraph{Analysis}
We split all the edges of $G^\ast$ into the following groups:
\begin{itemize}
\item[(i)] edges incident to vertices of $W \cup S$ (recall that $W$ is a set defined as in the proof of Theorem~\ref{thm:clique-and-cross-full}),
\item[(ii)] edges between two different graphs $G_i$,$G_j$,
\item[(iii)] edges in each graph $G_i$,
\item[(iv)] edges within each gadget $D_j$.
\end{itemize}
First let us assume that for some $0 \le i_0 < t$ the instance $(G_i,k,\cC_i)$
of \icliquecover is a YES-instance.
We construct a set of cliques $\cC$.
For each vertex $x \in W \cup S$ we add to $\cC$ a clique $N_{G'}[x]$.
Hence by using $|W|+|S|=n^2h_t+6nh_t$ cliques we cover all edges of (i) and (ii) (by the same analysis as in the proof of Theorem~\ref{thm:clique-and-cross-full}).
Let us assume that each set $\cC_i$ is of the form $\cC_i = \{C^{i}_0,\ldots,C^{i}_{k}\}$
and also let us somewhat abuse the notation and assume that the $k$ cliques $C^{i_0}_0,\ldots,C^{i_0}_{k-1}$
form a solution for the YES-instances $(G_{i_0},k)$.
Let $(b_0b_1\ldots b_{h_t})_2$ be the binary representation of $i_0$.
Set $Z_j=L,Z_j'=R$ iff $b_j$ equals one and $Z_j=R,Z_j'=L$ otherwise, for $j=0,\ldots,h_t-1$.
For each $j=0,\ldots,h_t-1$ 
we add to $\cC$ exactly $4$ cliques
$\left\{d^{Z_j}_{j,1},d^{Z_j'}_{j,2},d^{Z_j'}_{j,3}\right\}$,
$\left\{d^{Z_j}_{j,2},d^{Z_j'}_{j,1},d^{Z_j'}_{j,3}\right\}$,
$\left\{d^{Z_j}_{j,3},d^{Z_j'}_{j,1},d^{Z_j'}_{j,2}\right\}$,
$\left\{d^{Z_j}_{j,1},d^{Z_j}_{j,2},d^{Z_j}_{j,3}\right\}\cup \left(\bigcup_{i=0}^{t-1} \left(C^i_k \cap V_j^{Z_j}\right)\right)$ (see Fig.~\ref{fig:edge-clique-cover-full}).
It is easy to verify that the $4$ added sets are indeed cliques in $G'$
and that they cover edges of (iv).
Note that the last of the four cliques contains the last clique of the solution $\cC_i$
for each instance $i$ that has $j$-th bit equal to $b_j$.
Consequently, some of the edges of (iii) are covered.
We add exactly $k$ more cliques to $\cC$, that is for each $\ell=0,\ldots,k-1$ we add to $\cC$ a clique $\bigcup_{i=0}^{t-1} C^i_\ell$.
Since for each $i=0,\ldots,t-1$ such that $i\not= i_0$ there is a clique in $\cC$ containing $C^i_k$
and the first $k$ cliques of $\cC_{i_0}$ form a cover of $G_i$, we infer
that all edges of (iii) are covered and, therefore, $(G',k')$ is a YES-instances of \cliquecover.

In the other direction, assume that $(G',k')$ is a YES-instance of \cliquecover
and let $\cC$ be any solution containing $k'$ cliques.
Since $W \cup S$ is an independent set in $G'$ and each vertex in $W \cup S$ is not isolated,
we infer that there are at least $|W|+|S|$ cliques in $\cC$ containing a vertex of $W \cup S$.
Let $\cC' \subseteq \cC$ be the set of at most $k'-|W|-|S|$ cliques of $\cC$ which have empty intersection with $W \cup S$.
Each vertex in $W\cup S$ is adjacent to exactly one vertex in each $V_i$ for $0\le i < t$ and at most one
vertex in $V(D_j)$ for $0 \le j < h_t$, therefore cliques of $\cC'$ cover all edges of (iii) and (iv).
Moreover, $|\cC'| \le k'-|W|-|S|=4h_t+k$.
We use the following lemma which we prove afterwards.
\begin{lemma}
\label{lem:clique-nice-form-full}
One can modify the set $\cC'$ maintaining coverage of edges of (iii) and (iv) and not incrementing its size, while
at the same time for each $j=0,\ldots,h_t-1$ obeying the following conditions:
\begin{itemize}
  \item[(a)] $\cC'$ contains exactly $3$ cliques containing both a vertex of $L_j$ and $R_j$ (recall that $L_j \cup R_j = V(D_j)$),
  \item[(b)] $\cC'$ contains exactly $1$ clique $C$ having exactly one of the two intersections $C \cap L_j$,$C \cap R_j$ non-empty.
\end{itemize}
\end{lemma}
Before we prove Lemma~\ref{lem:clique-nice-form-full} let us finish the proof of Theorem~\ref{thm:clique-cross-full}
assuming that Lemma~\ref{lem:clique-nice-form-full} holds.
Since no two vertices from different gadgets $D_{j_1},D_{j_2}$ are adjacent, we infer that $\cC'$
contains exactly $4h_t$ cliques containing a vertex from some $D_j$ for $0 \le j < h_t$.
For each $j=0,\ldots,h_t-1$, let $C$ be the clique from (b) of Lemma~\ref{lem:clique-nice-form-full}.
If $C \cap L_j\not=\emptyset$, we take $I_j \subseteq \{0,\ldots,t-1\}$ to be the set of instance numbers that have the $j$-th bit
equal to one, whereas if $C \cap R_j \not=\emptyset$, then as $I_j$ we take all
the instance numbers that have the $j$-th bit equal to zero.
Observe that $\bigcap_{j=0}^{h_t-1}I_j$ contains exactly one element and denote it by $i_0$.
By Lemma~\ref{lem:clique-nice-form-full} every clique from $\cC'$, that contains a vertex of $V(D_j)$ for any $j=0,\ldots,h_t-1$, has to be disjoint with $V_{i_0}$. Indeed, cliques containing vertices from $V(D_j)$ satisfy (a) or (b) from Lemma~\ref{lem:clique-nice-form-full};
a clique from (a) contains both vertices of $L_j$ and $R_j$ and no vertex of $V_{i_0}$ is incident to both $L_j$ and $R_j$,
while a clique from (b) contains vertices of the one of the sets $L_j,R_j$ which is not connected to anything in $V_{i_0}$.
Therefore, edges of $G_{i_0}$ are covered by $|\cC'|-4h_t \le k$ cliques, being intersections of the remaining cliques in $\cC'$ with $V_{i_0}$. Consequently, $(G_{i_0},k)$ is a YES-instance, which finishes the proof of Theorem~\ref{thm:clique-cross-full}.
\end{proof}

\begin{proof}[Proof of Lemma~\ref{lem:clique-nice-form-full}]
Let $j$ be any index for which the lemma does not hold, i.e., the cliques containing vertices from $V(D_j)$ do not behave as in the lemma statement. We refine the set $\cC'$ repairing its behaviour on gadget $D_j$ and not spoiling the behaviour on other gadgets. By applying this reasoning to all the gadgets that need repairing, we prove the lemma.

Let us denote by $\Hh_j$ the set of cliques from $\cC'$ that contain a vertex from $V(D_j)$, while let $\Ii_j\subseteq \Hh_j$ be the set of these cliques from $\Hh_j$, which have nonempty intersection with both $L_j$ and $R_j$. The goal is to obtain a situation, when $|\Hh_j|=4$ and $|\Ii_j|=3$ for every $j$. During refining the set $\cC'$ we will change only the set $\Hh_j$. As there are no edges between the gadgets, the sets $\Hh_j$ are always disjoint, so our repairs do not spoil the behaviour of $\cC'$ on other gadgets.

Let $C \in \Hh_j$. Observe that $|C \cap (L_j \cup R_j)| \le 3$, since $C$ contains at most one of the vertices $D^L_{j,p},D^R_{j,p}$ for $p=1,2,3$.
Therefore, each clique in $\Hh_j$ covers at most $3$ out of $12$ edges of $D_j$ and, consequently, $|\Hh_j|\geq 4$.
Consider two cases.

First assume that $\Hh_j\setminus \Ii_j\neq \emptyset$, i.e., there exists a clique $C_0 \in \cC'$ which has an element of $L_j \cup R_j$,
but has exactly one of the two intersections $C_0 \cap L_j$, $C_0 \cap R_j$ non-empty.
By symmetry assume that $(C_0 \cap (L_j \cup R_j)) \subseteq L_j$.
Note that $C_0 \cup L_j$ also forms a clique, hence w.l.o.g. we may assume that $L_j \subseteq C_0$.
We know that $|\Ii_j|\geq 3$, since $\cC'$ covers all the $6$ edges of $E(L_j,R_j)$,
whereas each clique in $\cC'$ covers at most two of them.
Note that each clique from $\Ii_j$ is entirely contained in $D_j$,
since there is no vertex outside of $D_j$ which is adjacent to both a vertex of $L_j$ and a vertex of $R_j$.
Therefore, we may substitute the whole $\Ii_j$ with just $3$ cliques:
\begin{align*}
C_1:= \{d^L_{j,1},d^R_{j,2},d^R_{j,3}\}\,,\\
C_2:= \{d^L_{j,2},d^R_{j,1},d^R_{j,3}\}\,,\\
C_3:= \{d^L_{j,3},d^R_{j,1},d^R_{j,2}\}\,,
\end{align*}
maintaining the property that $\cC'$ covers all the edges. Observe that cliques $C_0,C_1,C_2,C_3$ already cover all the edges in $D_j$.
Finally, after this modification for any $C \in \cC'$ such that $C \not\in \{C_0,C_1,C_2,C_3\}$
we set $C := C \setminus (L_j \cup R_j)$ and satisfy both constraints (a) and (b) of the lemma for this particular $j$.

Now assume that for $\Ii_j=\Hh_j$.
Similarly as in the previous case, for each clique $C \in \Ii_j$ we have $C \subseteq (L_j \cup R_j)$. As $|\Ii_j|=|\Hh_j|\geq 4$, we may substitute the whole $\Ii_j=\Hh_j$ with just $4$ cliques:
\begin{align*}
C_0:= \{d^L_{j,1},d^L_{j,2},d^L_{j,3}\}\,,\\
C_1:= \{d^L_{j,1},d^R_{j,2},d^R_{j,3}\}\,,\\
C_2:= \{d^L_{j,2},d^R_{j,1},d^R_{j,3}\}\,,\\
C_3:= \{d^L_{j,3},d^R_{j,1},d^R_{j,2}\},
\end{align*}
out of which exactly one is not contained in the new $\Ii_j$. As these cliques cover all the edges of $D_j$ and every removed clique was entirely contained in $D_j$, all the edges of $G'$ are still covered.
In this way we make the modified set $\cC'$ satisfy both constraints (a) and (b) for the considered value of $j$.
\end{proof}

\begin{corollary}
There is no polynomial kernel for the \cliquecover problem parameterized by $k$ unless \unlesscompass{}.
\end{corollary}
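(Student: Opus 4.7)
The plan is to obtain this corollary as a direct application of the generic cross-composition lower bound (Theorem~\ref{thm:cross-composition}) to the two ingredients that have just been established in this section. Theorem~\ref{thm:cross-composition} requires an NP-hard (under Karp reductions) source language $L$ that cross-composes into the target parameterized problem $Q$; under these hypotheses, any polynomial kernel for $Q$ forces \unlesscompass{}. I would therefore set $L = \icliquecover$ and $Q = \cliquecover$ parameterized by $k$.

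Ingredient one, NP-hardness of the source language, is supplied by the preceding lemma, which reduces $3$-\colouring on $4$-regular planar graphs to \icliquecover; crucially, that reduction is a genuine Karp reduction, since a cover of size $k+1$ is produced explicitly in polynomial time using a $4$-colouring of the input graph (obtained via Brooks' theorem). Ingredient two, the cross-composition itself, is exactly the content of Theorem~\ref{thm:clique-cross-full}. Together they satisfy both hypotheses of Theorem~\ref{thm:cross-composition} verbatim.

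The proof is then a one-line invocation: assume, towards a contradiction, that \cliquecover parameterized by $k$ admits a polynomial kernel; then Theorem~\ref{thm:cross-composition} applied with $L = \icliquecover$ and $Q = \cliquecover$ yields \unlesscompass{}, which is the contrapositive of the statement of the corollary. There is essentially no obstacle to overcome at this stage, since all the technical work, namely proving NP-completeness of the promise variant \icliquecover and constructing the instance-selector gadgets $D_j$ in Theorem~\ref{thm:clique-cross-full}, has already been done; the corollary is simply the packaging of those two results through the lower bound framework.
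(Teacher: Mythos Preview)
Your proposal is correct and matches the paper's approach exactly: the corollary is stated without a separate proof in the paper, being an immediate consequence of combining the NP-completeness of \icliquecover{}, the cross-composition of Theorem~\ref{thm:clique-cross-full}, and the framework Theorem~\ref{thm:cross-composition}. There is nothing to add.
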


\section{Directed Multiway Cut}\label{sec:dir-mc-full}

In the \dirmc problem we want to disconnect
every pair of terminals in a directed graph.
The problem was previously studied in 
the following two versions.

\defproblemnoparam{\diremc}{A directed graph $G=(V,A)$, a set
 of terminals $T\subseteq V$ and an integer $p$.}
{Does there exist a set $S$ of at most $p$ arcs in $A$,
such that in $G \setminus S$ there is no path
between any pair of terminals in $T$?}

\defproblemnoparam{\dirvmc}{A directed graph $G=(V,A)$, a set
 of terminals $T\subseteq V$, a set of forbidden
  vertices $V^\infty \supseteq T$ and an integer $p$.}
{Does there exist a set $S$ of at most $p$ vertices in $V \setminus V^\infty$,
such that in $G \setminus S$ there is no path between any pair of terminals in $T$?}

As a side note, observe that by replacing each vertex of $V^\infty\setminus T$ with a $p+1$-clique (i.e., a graph on $p+1$ vertices pairwise connected by arcs in both directions), one can reduce
the above \dirvmc version to a version, where the solution is allowed to remove any nonterminal vertex.
Moreover, it is well known, that given an instance $I$ of one of the two problems above,
one can in polynomial time create an equivalent instance $I'$ of the other problem,
where both the number of terminals and the value of $p$ remain unchanged (e.g. see~\cite{marx:dir-mc}).
Therefore we show cross-composition to \dirvmc and as a corollary we
prove that \diremc also does not admit a polynomial kernel.
The starting point is the following restricted variant of \dirvmc, which we prove to be NP-complete
with respect to Karp reductions.

\defproblemnoparam{\prepdirvmc}{A directed graph $G=(V,A)$, 
  two terminals $T=\{s_1,s_2\}$, a set of forbidden
  vertices $V^\infty \supseteq T$ and an integer $p$.
  Moreover, after removing any set of at most $p/2$ vertices of $V \setminus V^\infty$,
  both an $s_1s_2$-path and an $s_2s_1$-path remain.
}
{Does there exist a set $S$ of at most $p$ vertices in $V \setminus V^\infty$,
such that in $G \setminus S$ there is no $s_1s_2$-path nor $s_2s_1$-path?}

The assumption that any set of size at most $p/2$ can not hit all the paths from $s_1$ to $s_2$ (and similarly from $s_2$ to $s_1$)
will help us in constructing cross-composition.

\begin{lemma}
\prepdirvmc is NP-complete with respect to Karp's reductions.
\end{lemma}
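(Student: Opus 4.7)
The plan is to give a Karp reduction from the unpromised two-terminal \dirvmc to \prepdirvmc. The source problem is NP-complete: two-terminal \diremc is known to be NP-complete by \cite{mc-apx-2}, and, as already noted in the text, one can pass between \diremc and \dirvmc in polynomial time while preserving both the number of terminals and the value of $p$. Membership of \prepdirvmc in NP is immediate, as the cut set itself is a polynomially verifiable certificate.

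Given an instance $(G, \{s_1, s_2\}, V^\infty, p)$ of unpromised two-terminal \dirvmc, I would build $(\tilde G, \{s_1, s_2\}, V^\infty, \tilde p)$ by adding $p+1$ fresh removable vertices $w_1, \ldots, w_{p+1}$ (not inserted into $V^\infty$), each equipped with the four arcs $s_1 \to w_i$, $w_i \to s_2$, $s_2 \to w_i$, and $w_i \to s_1$, and by setting $\tilde p := 2p+1$. For the equivalence of the two instances, note that each $w_i$ is the unique removable vertex on the paths $s_1 \to w_i \to s_2$ and $s_2 \to w_i \to s_1$, so any feasible cut for $\tilde G$ must contain all of $w_1, \ldots, w_{p+1}$. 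This forces exactly $p+1$ vertices into the cut set, leaving a residual budget of $\tilde p - (p+1) = p$ to cut both directions in $G$; conversely, augmenting any feasible $p$-cut of $G$ with $\{w_1, \ldots, w_{p+1}\}$ produces a feasible $\tilde p$-cut of $\tilde G$.

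For the promise, I observe that the $w_i$-gadgets are internally vertex-disjoint from $G$ and from one another, so the minimum $s_1 s_2$-cut (and symmetrically the minimum $s_2 s_1$-cut) of $\tilde G$ restricted to removable vertices equals the corresponding quantity in $G$ plus $p+1$, which is at least $p+1 > \lfloor \tilde p/2 \rfloor = p$; hence the promise is satisfied. The main difficulty in the argument is precisely to engineer this promise: a naive padding by $q$ parallel single-vertex paths added separately in each direction would raise each directional min cut by $q$ but would also raise $\tilde p$ by $2q$, preserving the ratio between the directional min cut and $\tilde p/2$ and producing no useful slack. The use of \emph{bidirectional} shared gadgets $w_i$ circumvents this obstruction: each $w_i$ contributes $+1$ to both directional cuts while contributing only $+1$ to the multiway cut quantity that $\tilde p$ must bound, yielding exactly the decoupling required.
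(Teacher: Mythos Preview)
Your construction is identical to the paper's: both add $p+1$ fresh removable vertices, each joined to $s_1$ and $s_2$ by arcs in both directions, and set the new budget to $2p+1$, with the same equivalence argument (every $w_i$ is forced into any solution) and the same promise argument (at most $p$ deletions leave some $w_i$ and hence paths in both directions). One minor point the paper makes explicit that you gloss over is that NP membership also requires checking that the input actually satisfies the promise, which is handled by two polynomial-time min-cut computations; your certificate-only justification is slightly incomplete on this front.
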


\begin{proof}
Note that in order to show that the problem is in NP we need to argue that we can verify the condition
concerning removal of $p/2$ vertices.
However, this can be checked by a polynomial-time algorithm computing min $s_1$--$s_2$ cut and min $s_2$--$s_1$ cut.
If any of those cuts is of size at most $p/2$, then the instance is not a proper instance of \prepdirvmc.

To prove that the problem is NP-hard we use the NP-completeness result of Garg et al.~\cite{mc-apx-2}
for \dirvmc with two terminals.
Consider an instance $I=(G,T=\{s_1,s_2\},V^\infty,p)$ of \dirvmc.
As the graph $G'$ we take $G$ with $z=p+1$ vertices $\{u_1,\ldots,u_z\}$
added.
In $G'$ for $i=1,\ldots,z$ we add the following four arcs $\{(s_1,u_i),(u_i,s_1),(u_i,s_2),(s_2,u_i)\}$.
Let $I'=(G',T,V^\infty,p+z)$ be an instance of \prepdirvmc.
Since after removal of less than $z$ vertices in $G'$ at least one vertex $u_i$ remains,
we infer that $I'$ is indeed a \prepdirvmc instance.
To prove that $I$ is a YES-instance iff $I'$ is a YES-instance it is enough to observe
that any solution in $I'$ contains all the vertices $\{u_1,\ldots,u_z\}$.
\end{proof}

Equipped with the \prepdirvmc problem definition, we are ready to show a cross-composition into 
\dirvmc parameterized by $p$.

\begin{theorem}
\prepdirvmc cross-composes into \dirvmc with two terminals, parameterized by the size of the cutset $p$.
\end{theorem}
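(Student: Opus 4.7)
The plan is to use a light chain-style cross-composition in which the \prepdirvmc promise carries essentially the whole argument. For the polynomial equivalence relation $\prel$ I would group two instances together iff they share the number of vertices and the value of $p$, which gives polynomially many classes. Given $t$ instances $(G_i, \{s_1^i, s_2^i\}, V_i^\infty, p)$ from one class, the output graph $G^\ast$ is the disjoint union of all $G_i$ plus two fresh vertices $s_1^\ast, s_2^\ast$, into which I splice bidirectional arc pairs $s_1^\ast \leftrightarrow s_1^1$, $s_2^i \leftrightarrow s_1^{i+1}$ for each $1 \le i < t$, and $s_2^t \leftrightarrow s_2^\ast$, thereby chaining the instances in series. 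The forbidden set is $V^{\infty,\ast} = \{s_1^\ast, s_2^\ast\} \cup \bigcup_i V_i^\infty$, the terminals are $\{s_1^\ast, s_2^\ast\}$, and the new parameter is simply $p^\ast := p$, which is bounded by $\max_i |x_i|$ as required by the cross-composition framework.

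The structural observation driving both directions is that the chain forces every $s_1^\ast s_2^\ast$-path in $G^\ast$ to traverse each $G_i$ from $s_1^i$ to $s_2^i$, and every $s_2^\ast s_1^\ast$-path to traverse each $G_i$ from $s_2^i$ to $s_1^i$; all traffic between the two global terminals must funnel through every $G_i$ because the only way to cross from the $s_1^i$ side of $G_i$ to the $s_1^{i+1}$ side of $G_{i+1}$ is through the gateway arcs at $s_2^i$. This makes completeness immediate: if some $(G_{i^\ast}, \{s_1^{i^\ast}, s_2^{i^\ast}\}, V_{i^\ast}^\infty, p)$ is a YES-instance witnessed by a solution $S_{i^\ast}$ of size at most $p$, then $S_{i^\ast}$ alone cuts the $G_{i^\ast}$-segment of every chain path in both directions, so it is a valid \dirvmc solution of size $\le p^\ast$ in $G^\ast$.

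For soundness, suppose $G^\ast$ admits a \dirvmc solution $S^\ast$ with $|S^\ast| \le p^\ast = p$; since $V^{\infty,\ast}$ is off-limits, $S^\ast$ splits as $\bigsqcup_i S_i$ with $S_i := S^\ast \cap V(G_i)$. The chain structure guarantees the existence of some $i_f$ for which $S_{i_f}$ is an $s_1^{i_f}$--$s_2^{i_f}$ cut in $G_{i_f}$, and some $i_r$ for which $S_{i_r}$ is an $s_2^{i_r}$--$s_1^{i_r}$ cut in $G_{i_r}$, since otherwise a forward (respectively reverse) chain path would survive. The \prepdirvmc promise then forces $|S_{i_f}| > p/2$ and $|S_{i_r}| > p/2$. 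If $i_f \ne i_r$, disjointness yields $|S^\ast| \ge |S_{i_f}| + |S_{i_r}| > p$, contradicting $|S^\ast| \le p$. Therefore $i_f = i_r =: i^\ast$, and $S_{i^\ast}$ is a \dirvmc solution of size $\le p$ for $G_{i^\ast}$, so $G_{i^\ast}$ is a YES-instance.

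The one delicate point is this strict-inequality accounting, and this is precisely what the $p/2$-promise of \prepdirvmc was engineered to deliver: without it, the zero-slack choice $p^\ast = p$ would permit a solution that splits the forward cut across one instance and the reverse cut across another, spoiling soundness. I expect this is the only nontrivial step; the construction itself is combinatorially trivial, so the whole proof boils down to verifying the chain property and this one promise-powered budget inequality.
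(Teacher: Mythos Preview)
Your proposal is correct and follows essentially the same chain-composition as the paper; the only differences are cosmetic (the paper identifies $s_2^i$ with $s_1^{i+1}$ rather than linking them by bidirectional arcs, reuses $s_1^1,s_2^t$ as the global terminals instead of adding fresh ones, and groups instances only by $p$). Your $i_f=i_r$ budget argument is logically equivalent to the paper's phrasing of soundness, which iteratively discards any piece $S\cap V(G_i)$ of size at most $p/2$ until $S$ is concentrated in a single $G_{i_0}$.
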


\begin{proof}
For the equivalence relation $\mathcal{R}$, we take a relation
that groups the input instances according to the value of $p$.
Formally $(G_i,T_i,V^\infty_i,p_i)$ and $(G_j,T_j,V^\infty_j,p_j)$
are in the same equivalence class in $\mathcal{R}$ iff $p_i=p_j$.
Therefore, we assume that we are given 
a sequence $I_{i} = (G_i,T_i=\{s_1^i,s_2^i\},V^\infty_i,p)_{i=1}^{t}$ of \prepdirvmc instances that 
are in the same equivalence class of $\mathcal{R}$.

As the graph $G'$ we take disjoint union of all the graphs $G_i$.
Moreover for each $i=1,\ldots,t-1$, in $G'$ we identify the vertices $s_2^i$ and $s_1^{i+1}$.
Let $I'=(G',\{s_1^1,s_2^t\},\bigcup_{i=1}^t V^\infty_i,p)$ be an instance of \dirvmc.
Note that $\bigcup_{i=1}^t V^\infty_i$ contains both terminals from all input instances.

Let us assume that there exists $1 \le i_0 \le t$ such that $I_{i_0}$ is
a YES-instance of \prepdirvmc, and let $S \subseteq V(G_i) \setminus V_i^\infty$ be any solution for $I_{i_0}$.
Since any $s_1^1s_2^t$-path and any $s_2^ts_1^1$-path in $G'$ goes through both $s_1^{i_0}$ and $s_2^{i_0}$,
we observe that $G'\setminus S$ is a solution for $I'$ and, consequently, $I'$ is a YES-instance.

In the other direction, let us assume that $I'$ is a YES-instance.
Let $S \subseteq V(G) \setminus \bigcup_{i=1}^t V^\infty_i$
by any solution for $I'$.
Observe that if the set $S$ contains at most $p/2$ vertices of $V(G_i) \setminus V_i^\infty$
for some $1 \le i \le t$, then $S \setminus V(G_i)$ is also a solution for $I'$, since
after removing at most $p/2$ vertices of $V(G_i)$ there is still a path both from $s_1^i$ to $s_2^i$ and
from $s_2^i$ to $s_1^i$.
Because $|S| \le p$, we infer that w.l.o.g. $S$ contains only vertices of a single set $V(G_{i_0})$
for some $1 \le i_0 \le t$. Therefore, $I_{i_0}$ is a YES-instance.
\end{proof}

The equivalence of \dirvmc and \diremc together with Theorem~\ref{thm:cross-composition} give us the following corollary.

\begin{corollary}
Both \dirvmc and \diremc do not admit a polynomial kernel when parameterized by $p$ unless \unlesscompass, even in the case of two terminals.
\end{corollary}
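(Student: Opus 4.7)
The plan is to chain the $t$ input \prepdirvmc instances end-to-end into a single directed graph, using the two terminals of each instance as ``glue'' points between consecutive blocks. As the polynomial equivalence relation $\mathcal{R}$ I take the relation that declares two instances equivalent iff they share the same value of $p$; this yields at most $\max_i |x_i|$ classes, so $\mathcal{R}$ is polynomial. Given a sequence $(G_i, \{s_1^i, s_2^i\}, V_i^\infty, p)_{i=1}^t$ in one class, I form $G'$ as the disjoint union of the $G_i$ and identify $s_2^i$ with $s_1^{i+1}$ for $i = 1, \ldots, t-1$, then output the \dirvmc instance $I^\ast = (G', \{s_1^1, s_2^t\}, \bigcup_{i=1}^t V_i^\infty, p)$. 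The new parameter is $k^\ast = p \le \max_i |x_i|$, which clearly satisfies the size requirement of cross-composition.

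For the forward direction, if $I_{i_0}$ has a solution $S$ of size at most $p$, then any $s_1^1 s_2^t$-path in $G'$ must traverse every glue vertex $s_2^i = s_1^{i+1}$ in order, so in particular it contains a subpath inside $G_{i_0}$ from $s_1^{i_0}$ to $s_2^{i_0}$ which $S$ destroys; the symmetric argument for $s_2^t s_1^1$-paths closes this direction.

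For the backward direction, the promise built into \prepdirvmc is what does the work. Given any solution $S \subseteq V(G')\setminus\bigcup_i V_i^\infty$ of size at most $p$ for $I^\ast$, the chain structure forces that cutting all $s_1^1 s_2^t$-paths requires cutting all $s_1^i s_2^i$-paths within some single block $G_i$ (otherwise one can concatenate intact segments across all the blocks to reconstruct a full path), and symmetrically there must be some block $G_j$ in which $S$ cuts all $s_2^j s_1^j$-paths. By the promise, each such cut needs strictly more than $p/2$ vertices of $S$ inside the corresponding block; since $|S| \le p$, we are forced to have $i = j$, call it $i_0$, and then $S \cap V(G_{i_0})$ is already a valid solution of size at most $p$ for $I_{i_0}$.

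The main conceptual hurdle this construction has to clear is preventing a ``dispersed'' solution that saves vertices by partially cutting several blocks at once. The $p/2$-promise rules this out in both directions simultaneously, which is why the cross-composition is stated for the promise variant and why establishing NP-completeness of \prepdirvmc (done in the preceding lemma by padding with a $p{+}1$-vertex switching gadget between the two terminals) is the true engine of the argument; the chaining step itself is then almost forced.
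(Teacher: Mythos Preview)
Your proposal is correct and follows essentially the same construction as the paper: chain the instances via identifying $s_2^i$ with $s_1^{i+1}$, keep the same budget $p$, and rely on the $p/2$-promise to force the solution into a single block. The only cosmetic difference is in the backward direction: the paper argues that any block touched by at most $p/2$ solution vertices can be discarded without harm (hence w.l.o.g.\ the solution lives in one block), whereas you argue directly that some block must have all forward paths cut and some block all backward paths cut, each costing more than $p/2$, so the two blocks coincide. Both arguments are equivalent uses of the same promise.
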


\newcommand{\cT}{\mathcal{T}}

\section{Multicut}\label{sec:multicut-full}

In this section we prove that both the
edge and vertex versions of the \multicut
problem do not admit a polynomial kernel,
when parameterized by the size of the cutset.

\defproblemnoparam{{\sc Edge} ({\sc Vertex}) \multicut}{An undirected graph~$G=(V,E)$, a set
of pairs of terminals~$\cT=\{(s_1,t_1),\ldots,(s_k,t_k)\}$ and an integer~$p$.}
{Does there exists a set~$S \subseteq E$ ($S \subseteq V)$ such that no connected component
of~$G\setminus S$ contains both vertices~$s_i$ and~$t_i$, for some~$1 \le i \le k$?}

It is known that the vertex version of the \multicut problem is at least as hard
as the edge version. 

\begin{lemma}[folklore]
There is a polynomial time algorithm, which given an instance~$I=(G,\cT,p)$ of \emulticut
produces an instance~$I'=(G',\cT',p)$ of \vmulticut, such that~$I$ is a YES-instance
iff~$I'$ is a YES-instance.
\end{lemma}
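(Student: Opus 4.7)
The plan is to combine two classical gadget tricks: subdividing every edge of $G$ so that a single vertex in $G'$ represents each edge, and replacing each original vertex by $p+1$ false-twin copies so that no original vertex can ``afford'' to be in the cut. Concretely, I would construct $G'$ as follows. For each $v \in V(G)$ introduce $p+1$ pairwise non-adjacent copies $v^1, \ldots, v^{p+1}$; for each edge $uv \in E(G)$ introduce a subdivision vertex $x_{uv}$ and add edges $x_{uv} u^i$ and $x_{uv} v^i$ for every $i \in \{1, \ldots, p+1\}$. Then set $\cT' = \{(s_i^1, t_i^1) : (s_i, t_i) \in \cT\}$ and keep the parameter $p$. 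The construction is plainly polynomial in $|G|+p$.

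For the easy direction, given an edge multicut $F \subseteq E(G)$ with $|F| \le p$, the set $S = \{x_{uv} : uv \in F\}$ should be a vertex multicut of $G'$: any $s_i^1$--$t_i^1$ path in $G'$ projects, via $v^j \mapsto v$ and $x_{uv} \mapsto uv$, to an $s_i$--$t_i$ walk in $G$, which $F$ intersects, so the lifted path must meet $S$.

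The harder direction is extracting an edge cut in $G$ from a vertex cut $S$ in $G'$ of size at most $p$. The key normalisation step is the claim that we may assume $S$ contains no copy $v^j$. For any fixed $v$ the copies $v^1, \ldots, v^{p+1}$ are false twins (they have identical neighborhoods in $G'$), and since $|S| \le p$ at least one copy $v^{j^\star}$ survives; hence any walk of $G' \setminus (S \setminus \{v^j\})$ using $v^j$ internally can be rerouted through $v^{j^\star}$ by replacing a two-edge transition $x_{uv}\,v^j\,x_{vw}$ with $x_{uv}\,v^{j^\star}\,x_{vw}$, yielding a walk in $G' \setminus S$. Thus dropping $v^j$ from $S$ never reconnects separated terminals, and iterating we end up with $S \subseteq \{x_{uv} : uv \in E(G)\}$. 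Finally $F = \{uv : x_{uv} \in S\}$ is an edge multicut of $G$: an uncut $s_i$--$t_i$ walk in $G \setminus F$ would lift to an $s_i^1$--$t_i^1$ walk in $G' \setminus S$ by picking any fixed copy index for each intermediate vertex.

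The main obstacle to anticipate is the terminal bookkeeping in the rerouting argument: if $v = s_i$ or $v = t_i$, the survivor $v^{j^\star}$ might coincide with the designated terminal $v^1$, so the rerouted object may cease to be a simple path. This is harmless because the vertex multicut condition is about connectivity in $G' \setminus S$, so the rerouted walk still certifies that $s_i^1$ and $t_i^1$ lie in the same component and contradicts $S$ being a multicut. With this caveat handled, the remaining verification is entirely routine.
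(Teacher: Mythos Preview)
The paper does not actually prove this lemma; it is stated as folklore and left without proof. So there is nothing in the paper to compare your argument against, and the relevant question is simply whether your construction is correct. It is not.

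Under the paper's definition of \vmulticut{} the solution $S$ ranges over all of $V$, so terminals may be deleted, and deleting a terminal vacuously satisfies every pair it belongs to. In your construction each pair $(s_i,t_i)\in\cT$ is represented by the \emph{single} pair $(s_i^1,t_i^1)$, so one can always take $S=\{s_i^1 : (s_i,t_i)\in\cT\}$; this is a feasible vertex multicut of size $|\cT|$ no matter how large the edge multicut in $G$ must be. Concretely, let $G$ consist of one terminal pair $(s,t)$ joined by $p+1$ internally disjoint length-two paths: the \emulticut{} instance $(G,\{(s,t)\},p)$ is a NO-instance, yet in your $G'$ the singleton $\{s^1\}$ is a valid vertex multicut of size $1\le p$. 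Your normalisation step breaks exactly here: the rerouting argument handles only \emph{internal} occurrences of a copy $v^j$ on a walk, but when $v^j=s_i^1$ is itself the terminal endpoint, dropping it from $S$ can genuinely reconnect $s_i^1$ to $t_i^1$. The ``terminal bookkeeping'' paragraph you wrote addresses a different, harmless issue (walks versus simple paths) and does not cover this case.

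The standard repair is to make every pair of copies a terminal pair, i.e.\ set
\[
\cT'=\bigl\{(s_i^a,t_i^b)\ :\ (s_i,t_i)\in\cT,\ 1\le a,b\le p+1\bigr\}.
\]
Now for each $i$ some $s_i^{a}$ and some $t_i^{b}$ survive in $G'\setminus S$ by pigeonhole, and these must be separated in the genuine graph sense; since the surviving representatives are never the vertex being dropped, your false-twin rerouting goes through and the normalisation to $S\subseteq\{x_{uv}:uv\in E(G)\}$ is valid. With this change the remainder of your argument is correct.
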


In order to show a cross-composition into the \multicut problem parameterized by~$p$
we consider the following restricted variant of the \mc problem with three terminals.

\defproblemnoparam{\mc}{An undirected graph~$G=(V,E)$, a set
 of three terminals~$T=\{s_1,s_2,s_3\} \subseteq V$ and an integer~$p$.
}
{Does there exist a set~$S$ of at most~$p$ edges in~$E$,
such that in~$G \setminus S$ there is no path between any pair of terminals in~$T$?}

\defproblemnoparam{\prepmc}{An undirected graph~$G=(V,E)$, a set
 of three terminals~$T=\{s_1,s_2,s_3\} \subseteq V$ and an integer~$p$.
 An instance satisfies: (i)~$\deg(s_1)=\deg(s_2)=\deg(s_3)=d>0$,
  (ii) for each~$j=1,2,3$ and any non-empty set~$X \subseteq V\setminus T$ we have~$|\delta(X \cup \{s_j\})| > d$, and (iii)~$d \le p < 2d$.
}
{Does there exist a set~$S$ of at most~$p$ edges in~$E$,
such that in~$G \setminus S$ there is no path between any pair of terminals in~$T$?}

Condition (i) ensures that degrees of all the terminals are equal,
whereas condition (ii) guarantees that the set of edges incident to a terminal~$s_j$
is the only minimum size~$s_j$--$(T\setminus \{s_j\})$ cut.
Having both (i) and (ii), condition (iii) verifies whether an instance is not 
a trivially YES- or NO-instance, because by (i) and (ii) there is no solution of size less than~$d$ 
and removing all the edges incident to two terminals always gives a solution of size at most~$2d$.


\begin{lemma}
\prepmc is NP-complete with respect to Karp's reductions.
\end{lemma}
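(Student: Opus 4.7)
My plan is to reduce from the standard three-terminal \mc problem, which is NP-hard by Dahlhaus et al.~\cite{nmc-np-hardness}. Containment in NP is straightforward once one observes that conditions (i)--(iii) are polynomial-time checkable: (i) is a trivial degree computation, (ii) amounts to verifying via a max-flow computation that $\delta(s_j)$ is the \emph{unique} minimum $s_j$--$(T\setminus\{s_j\})$ edge cut for each $j$ (one augments $s_j$ by one additional non-terminal and checks that the min cut strictly increases), and (iii) is arithmetic; after these checks one guesses the cut $S$.

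For NP-hardness, given an instance $(G, T=\{s_1,s_2,s_3\}, p)$ of \mc, I first discard trivial cases by removing any edge between two terminals and decrementing $p$, so that WLOG the terminals are pairwise non-adjacent. Then I wrap each terminal in a \emph{sheath gadget}: fixing a large parameter $M := p+1$, for every original edge $s_j v$ with $v$ a non-terminal I replace it by a short gadget containing a fresh ``router'' vertex $h$ adjacent to $s_j$ and $v$, together with $2M$ length-two private side arms that connect $h$ to each of the other two terminals $s_i, s_k$ through fresh degree-two vertices. Separately, to each terminal I also attach a collection of standalone router gadgets of the same shape but not linked to any vertex of $V(G)$, using these purely as padding to equalize the degrees of all three terminals to a common value $D$. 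Finally I set $p' := p + D$.

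Verifying (i)--(iii) becomes a careful but routine accounting: (i) is immediate by construction; (ii) follows from the observation that any non-empty set $X$ of non-terminals in $G'$ either contains a router (exposing at least $2M$ previously-hidden edges to $T\setminus\{s_j\}$) or a private side-arm vertex (exposing two fresh edges to terminals), so the boundary of $X\cup\{s_j\}$ strictly exceeds $D$; and (iii) is arranged by choosing $M$ large so that $D \leq p' < 2D$. The forward direction of equivalence is then easy: a cut $S \subseteq E(G)$ of size $\le p$ lifts to $S'$ in $G'$ by projecting each cut edge to the corresponding router-to-$v$ edge and additionally cutting the $D$ ``entry edges'' $\delta(s_{j^\ast})$ of one distinguished terminal, giving $|S'|\le p + D = p'$.

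The main obstacle will be the reverse direction. Given a solution $S'$ in $G'$ with $|S'| \le p' < 2D$, I have to argue that, up to replacement, $S'$ contains the full entry set $\delta(s_{j^\ast})$ of some single terminal and the remaining at most $p$ edges project down to a valid three-terminal multiway cut in $G$. The strict expansion enforced by condition~(ii) is what makes this work: any attempt to separate $s_{j^\ast}$ from $T\setminus\{s_{j^\ast}\}$ other than by taking $\delta(s_{j^\ast})$ costs strictly more than $D$, and by $|S'|<2D$ this ``premium'' can be paid for at most one terminal; the remaining two terminals must therefore each retain a fully intact router connecting them (through the $M$-fold side arms) to every would-be cheating path, forcing $S'\cap E(G)$ to already separate the other two terminals from $s_{j^\ast}$ and from each other inside $G$. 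Making this extremal argument tight, especially pinning down $s_{j^\ast}$ uniquely and handling routers whose entry edges are only partially cut, is the step where I expect to spend most care.
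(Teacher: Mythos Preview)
Your approach differs substantially from the paper's, and as written the construction does not work; there are two concrete failures.

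First, condition~(ii) is violated. Pick any side-arm vertex $x$ adjacent to $s_j$ (these exist: every router attached to $s_i$ or $s_k$ sends side arms to $s_j$). Then $x$ has degree~$2$, with neighbours $s_j$ and some router $h$, so $|\delta(\{s_j,x\})| = (D-1) + 1 = D$, not strictly greater than $D$. Your claim that a side-arm vertex ``exposes two fresh edges'' overlooks that one of those two edges is $xs_j$ itself, which leaves the boundary when $x$ is absorbed into $\{s_j\}$.

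Second, and more fatally, the forward direction of the equivalence breaks. Let $S$ be a multiway cut in $G$ and form $S'$ as you describe: the router-to-$v$ images of the terminal-incident edges of $S$, the remaining edges of $S$, and all of $\delta_{G'}(s_{j^\ast})$. Every router $h$ attached to $s_{j^\ast}$ (including each padding router) still has all of its $2M$ side arms to the other two terminals $s_i,s_k$ intact in $G'\setminus S'$; none of those arm edges lie in $\delta_{G'}(s_{j^\ast})$. Hence $G'\setminus S'$ contains the path $s_i\text{--}x\text{--}h\text{--}y\text{--}s_k$ for each such $h$, and $s_i$ and $s_k$ remain connected. Destroying these paths would cost at least $M=p+1$ per router, far beyond the budget $p'=p+D$. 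The side arms you introduce to force expansion simultaneously create many new terminal--terminal connections that your lifted solution cannot sever.

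For comparison, the paper avoids gadgetry altogether. It first preprocesses the given \mc instance so that $\delta(s_j)$ is the \emph{unique} minimum $s_j$--$(T\setminus\{s_j\})$ cut for every $j$ (iteratively contracting an edge incident to $s_j$ whenever another minimum cut exists); this already yields~(ii), except that the three terminal degrees may differ. To equalize them it takes three disjoint copies of $G$ and identifies their terminals in a cyclic pattern, so each new terminal has degree $d=\deg_G(s_1)+\deg_G(s_2)+\deg_G(s_3)$, and sets $p'=3p$; if~(iii) then fails the instance is trivially YES or NO. This sidesteps both issues above with no auxiliary vertices at all.
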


\begin{proof}
To prove the lemma we may
observe that the first NP-hardness reduction to the \mc problem
by Dahlhaus et al. \cite{nmc-np-hardness} in fact yields
a \prepmc instance. For sake of completeness, we 
present here how to reduce an arbitrary instance of
the \mc problem with three terminals to a \prepmc instance.

Let~$I=(G,T=\{s_1,s_2,s_3\},p)$ be an instance of \mc.
As observed by Marx~\cite{marx:cut}, we can assume that for each terminal~$s_i$ the cut~$\delta(s_i)$
is the only minimum cardinality~$s_i$--$(T \setminus \{s_i\})$ cut, since otherwise
w.l.o.g.\ we may contract some edge incident to~$s_i$ obtaining a smaller equivalent instance.
Therefore condition (ii) would be satisfied if only degrees of terminals were equal.
Let~$G_1$,~$G_2$,~$G_3$ be three copies of the graph~$G$, where terminals in the~$i$-th copy
are denoted by~$T_i=\{s^i_1,s^i_2,s^i_3\}$.
Construct a graph~$G'$ as a disjoint union of~$G_1$,~$G_2$ and~$G_3$.
Next in~$G'$ we identify vertices~$\{s^1_1,s^2_2,s^3_3\}$ into a single vertex~$s_1'$,
similarly identify vertices~$\{s^1_2,s^2_3,s^3_1\}$ into a single vertex~$s_2'$,
and finally identify vertices~$\{s^1_3,s^2_1,s^3_1\}$ into a single vertex~$s_3'$.
Let~$I'=(G',T'=\{s_1',s_2',s_3'\},p'=3p\}$.
Observe that due to the performed identification~$I'$ is a YES-instance of \mc iff~$I$ is a YES-instance of \mc.
Therefore, to finish the reduction it suffices to argue that~$I'$ satisfies (i), (ii) and (iii).

Let~$d=\sum_{i=1}^3 \deg_G(s_i)$.
Note that in~$G'$ for each~$i=1,2,3$ we have~$\deg_{G'}(s_i')=d$, hence
condition (i) is satisfied.
Observe that if there exists~$1 \le j \le 3$ and~$s_j'$--$(T'\setminus \{s_j'\})$ cut in~$G'$
of size at most~$d$, which is different from~$\delta(s_j')$, then
there exists~$1 \le r \le 3$ and~$s_r$--$(T\setminus \{s_r\})$ cut in~$G$ of size at most~$\deg_G(s_r)$
which is different from~$\delta_G(s_r)$, a contradiction.
Hence condition (ii) is satisfied.
Unfortunately, it is possible that~$p' \le d$ or~$p' \ge 2d$.
However, if~$p' \ge 2d$ then
clearly~$I'$ is a YES-instance (we can remove edges incident to two terminals), and hence~$I$ is a YES-instance.
On the other hand if~$p' < d$, then~$I'$ (and consequently~$I$) is a NO-instance, 
since any~$s_1'$--$\{s_2',s_3'\}$ cut has size at least~$d$.
Therefore, if condition (iii) is not satisfied, then in polynomial time
we can compute the answer for the instance~$I$, and as the instance~$I'$ we set a trivial YES- or NO-instance.
\end{proof}

\begin{theorem}
\label{thm:cross-multicut-full}
\prepmc cross-composes into \emulticut parameterized the size of the cutset~$p$.
\end{theorem}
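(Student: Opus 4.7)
The plan is to combine the $t$ equivalent instances of \prepmc into a single \emulticut instance with three global terminals $s_1^\ast, s_2^\ast, s_3^\ast$, whose pairwise separation will implicitly force a multiway cut in one of the input instances. For the equivalence relation $\mathcal{R}$, I would group \prepmc instances by the tuple $(|V(G)|, |E(G)|, d, p)$, yielding polynomially many classes; pad $t$ up to the next power of two by repeating some instance (at most doubling the count).

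I would build $G^\ast$ as the disjoint union of $G_1, \ldots, G_t$ together with a selector structure of size $O(\log t)$ that connects each global terminal $s_j^\ast$ to the instance terminals $s_j^1, \ldots, s_j^t$. The multicut terminal pairs are $\mathcal{T}^\ast = \{(s_1^\ast, s_2^\ast), (s_1^\ast, s_3^\ast), (s_2^\ast, s_3^\ast)\}$, and the new parameter is $p^\ast = p + O(\log t)$. The selector should be designed so that (a) $O(\log t)$ edges suffice to ``activate'' any chosen instance $i_0$, disconnecting every other $G_i$ from the global terminals, and (b) the choice of activated instance is forced to be the same as viewed from each of the three global terminals, not independently chosen per terminal.

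Completeness is the direct direction: given a multiway cut $F \subseteq E(G_{i_0})$ of size at most $p$ in a YES instance, I would take $F$ together with the $O(\log t)$ selector edges that activate $i_0$. After these cuts, every path between any two distinct $s_j^\ast, s_{j'}^\ast$ passes through $G_{i_0}$, and $F$ severs it, giving a multicut of size at most $p^\ast$. For soundness, I would run a budget argument based on properties (i)--(iii) of \prepmc: by property (ii) every nontrivial cut inside a single $G_i$ that affects terminal connectivity costs at least $d$, while property (iii) gives $p < 2d$, so the total budget $p^\ast$ is tight enough that the internal cuts can be concentrated in essentially a single instance $G_{i_0}$. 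Combined with the forced consistency property (b) of the selector, this shows that the cuts in $G_{i_0}$ must form a valid 3-way multiway cut of cost at most $p$, so $G_{i_0}$ is a YES-instance of \prepmc.

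The main obstacle is the selector design. The naive approach of building three independent ``selector trees'' (one rooted at each $s_j^\ast$, with leaves identified with $s_j^i$) fails soundness: the solver can cheaply isolate three distinct leaves $\ell_1^{i_1}, \ell_2^{i_2}, \ell_3^{i_3}$ with pairwise distinct indices, separating all three pairs via three disjoint instances without performing any multiway cut at all inside any $G_i$. Circumventing this cheat requires coupling the three selector structures so that any multicut of size at most $p^\ast$ forces $i_1 = i_2 = i_3$; candidate approaches include shared per-level bit-gadgets or additional cross-terminal pairs that become trivially separated only when the three selections coincide. Property (ii), which pins down $\delta(s_j^i)$ as the unique minimum one-versus-rest cut around each terminal of $G_i$, will be essential both for keeping the extra selector cost polylogarithmic and for the final bookkeeping that upgrades a ``small cut inside $G_{i_0}$'' into a full multiway cut witness.
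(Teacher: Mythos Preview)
Your proposal is a plan rather than a proof, and the gap you yourself flag is fatal in the form stated. You aim to cross-compose into an \emulticut instance with only the three fixed terminal pairs $\{(s_1^\ast,s_2^\ast),(s_1^\ast,s_3^\ast),(s_2^\ast,s_3^\ast)\}$. But \multicut with a bounded number of terminal pairs admits a (randomized) polynomial kernel by the matroid-based results of Kratsch and Wahlstr\"om cited in the paper; hence a correct cross-composition into \emulticut with three pairs, parameterized by the cutset size, would contradict that result under the usual assumptions. This is not a matter of cleverness in the selector design: no selector that keeps the number of terminal pairs constant can make the soundness argument go through. Indeed, the paper's concluding section explicitly leaves open whether the linear-in-$t$ number of terminal pairs can be avoided; your plan would resolve that open question.

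The paper's construction is quite different from what you sketch. It uses \emph{linearly many} terminal pairs and no $\log t$-sized selector at all. The instances $G_0,\ldots,G_{t-1}$ are arranged on a cycle by adding $d$ parallel edges between $s_2^i$ and $s_1^{(i+1)\bmod t}$, and the terminal set consists of the $t$ pairs $(s_3^i,\,s_3^{(i+h)\bmod t})$ where $h=(t-1)/2$ and $t$ is odd. The new budget is $p'=p+d$. Completeness cuts one bundle of $d$ parallel edges plus a multiway cut inside the chosen $G_{i_0}$. For soundness, properties (i)--(iii) of \prepmc guarantee that any solution intersects at most two of the ``blocks'' $E_i$ or bundles $E_i'$, and each such intersection has size at least $d$; a short case analysis on which two blocks are hit, together with the half-cycle offset $h$, forces a genuine $3$-way cut inside some $G_{i_0}$ of size at most $p$. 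The key conceptual point you are missing is that the instance-selection is achieved not by a compact gadget but by spending $\Theta(t)$ terminal pairs, which is free since only the cutset size is the parameter.
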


\begin{proof}
For the equivalence relation~$\mathcal{R}$, we take a relation
where all well-formed instances are grouped according to the values of~$p$ and~$d$.
Formally,~$(G_i,T_i,p_i)$ and~$(G_j,T_j,p_j)$
are in the same equivalence class in~$\mathcal{R}$ iff~$p_i=p_j$ 
and the degree of each terminal in~$G_i$ equals
the degree of each terminal in~$G_j$.
Therefore, we assume that we are given 
a sequence~$I_{i} = (G_i,T_i=\{s_1^i,s_2^i,s_3^i\},p)_{i=0}^{t-1}$ of \prepmc instances that 
are in the same equivalence class of~$\mathcal{R}$ (note that we number instances starting from~$0$).
Let~$d$ be the degree of each terminal in each of the instances.
W.l.o.g. we assume that~$t\ge 5$ is an odd integer, since we may copy some instances if needed, and let~$h=(t-1)/2$.

\paragraph{Construction} 
Let~$G'$ be the disjoint union of all graphs~$G_i$ for~$i=0,\ldots,t-1$.
For each~$i=0,\ldots,t-1$ we add~$d$ parallel edges between vertices~$s^i_2$ and~$s^{(i+1) \bmod t}_1$.
To the set~$\cT$ we add exactly~$t$ pairs, that is for each~$i=0,\ldots,t-1$ we add
to~$\cT$ the pair~$(s_i'=s_3^i,t_i'=s_3^{(i+h)\bmod t})$.
We set~$p'=p+d$ and~$I'=(G',\cT,p')$ is the constructed \emulticut instance.
Note that in order to avoid using parallel edges it is enough to subdivide them.

\begin{figure}[htp]
\begin{center}
\includegraphics[scale=1]{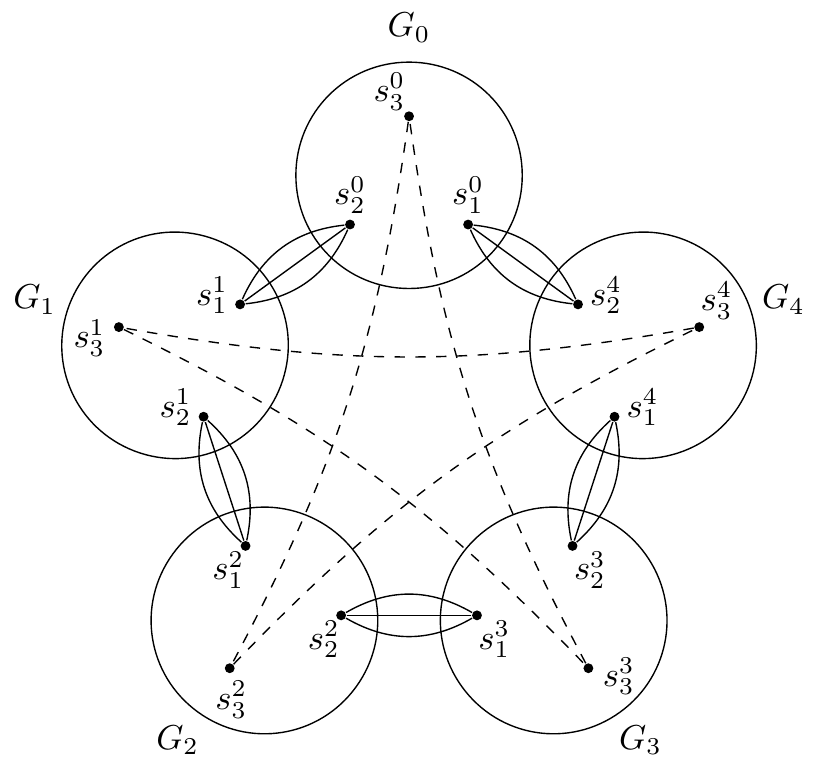}
\caption{Construction of the graph~$G'$ for~$t=5$ and~$d=3$. Dashed edges represent pairs of vertices in~$\cT$.}
\label{fig:multicut-full}
\end{center}
\end{figure}

\paragraph{Analysis}
First assume that there exists~$0 \le i_0 < t$ such that~$I_{i_0}$ is a YES-instance of \prepmc
and let~$S \subseteq E(G_{i_0})$ be any solution for~$I_{i_0}$.
Let~$S_1$ be the set of edges in~$G'$ between~$s_2^{(i_0+h) \bmod t}$ and~$s_1^{(i_0+h+1) \bmod t}$ 
(see Fig.~\ref{fig:multicut-full}).
We prove that~$S'=S \cup S_1$ is a solution for~$I'$.
Observe that~$|S'|=|S|+|S_1| \le p+d$. Consider any pair~$(s',t')\in \cT$ such that~$s',t' \not= s^{i_0}_3$.
Note that in~$G' \setminus S'$ there is neither an~$s^{i_0}_1s^{i_0}_2$-path, nor
an~$s^{(i_0+h) \bmod t}_2s^{(i_0+h+1)\bmod t}_1$-path.
Therefore, there is no~$s't'$-path~$G' \setminus S'$.
Moreover, in~$G' \setminus S'$ there is neither an~$s^{i_0}_3s^{i_0}_1$-path,
nor an~$s^{i_0}_3s^{i_0}_2$-path.
Consequently, for each~$(s',t') \in \cT$, where~$s'=s^{i_0}_3$ or~$t'=s^{i_0}_3$,
there is no~$s't'$-path in~$G' \setminus S'$, so~$I'$ is a YES-instance of \emulticut.

Now assume that~$I'$ is a YES-instance and our goal 
is to show that for some~$0 \le i < t$ the instance~$I_i$ is a YES-instance.
Let~$E_i=E(G_i)$ and let~$E_i'$ be the set of edges between~$s^{i}_2$ and~$s^{(i+1) \bmod t}_1$ in~$G'$.
Let~$S' \subseteq E(G')$ be any solution for~$I'$.
Note that if for some~$E_i'$, where~$0 \le i < t$, the set~$S'$ contains less than~$d$ edges from the set~$E_i'$, then~$S' \setminus E_i'$
is also a solution for~$I'$.
By conditions (i) and (ii) of the \prepmc problem definition we have the following: if for some~$0 \le i < t$ the set~$S'$ contains less than~$d$ edges from the set~$E_i$, then~$S' \setminus E_i$
is also a solution for~$I'$.
Indeed, if~$S'$ contains less than~$d$ edges from~$E_i$, then in the graph~$G' \setminus E_i$
all the vertices~$s^i_1$,~$s^i_2$,~$s^i_3$ are in the same connected component,
since otherwise for some~$a \in T_i$ there would be an~$a$--$(T_i \setminus \{a\})$ cut in~$G_i$ of size smaller than~$d$.
Let us recall that~$|S'| \le p'=p+d < 3d$.
Therefore, w.l.o.g. we may assume that the set~$S'$ has non-empty intersection with at most two sets from the set~$\mathcal{E}=\{E_0,\ldots,E_{t-1},E_0',\ldots,E_{i-1}'\}$.
Moreover we assume that if~$S'$ has non-empty intersection with some set from~$\mathcal{E}$, then this intersection is of size at least~$d$.

\textbf{Case 1.} Consider the case, when~$S'$ has an empty intersection with each of the sets~$E_i$ for~$0 \le i < t$.
Since~$|E_i'| = d$ and~$p' \ge 2d$, w.l.o.g.~$S'$ has a non-empty intersection with 
exactly two sets~$E_{i_0}'$,~$E_{i_1}'$ for~$i_0 \not= i_1$.
Since~$t$ is odd, in the graph~$G'\setminus S'$ either there is an~$s^{i_0}_3s^{(i_0-h) \pmod t}_3$-path or
an~$s^{(i_0+1)\bmod t}_3s^{(i_0+1+h)\bmod t}_3$-path.
Hence a contradiction.

\textbf{Case 2.}
Next assume that~$S'$ has a non-empty intersection with some set~$E_{i_0}$ for~$0 \le i_0 < t$.
By symmetry w.l.o.g. we may assume that~$i_0=0$.
Since the set~$S'$ hits all the~$s^{1}_3s^{h+1}_3$-paths as well as all the~$s^{h}_3s^{t-1}_3$-paths
in the graph~$G'$, we infer that~$S'$ has non-empty intersection with exactly one of the sets~$E_h$,~$E_h'$,~$E_{h+1}$.

\textbf{Case 2.1.}
In this case we assume that~$S'$ has a non-empty intersection with~$E_h'$.
Since~$S'$ hits all~$s^0_3s^h_3$-paths in~$G'$, in~$G_0 \setminus S'$ there is no~$s^0_3s^0_2$-path.
Similarly, since~$S'$ hits all~$s^0_3s^{h+1}_3$-paths in~$G'$, in~$G_0 \setminus S'$ there is no~$s^0_3s^0_1$-path.
Moreover, since~$S'$ hits all~$s^{t-1}_3s^{h-1}_3$-paths in~$G'$, in~$G_0 \setminus S'$ there is no~$s^0_1s^0_2$-path.
Since~$|S'| \le p' = p+d$ and~$|S' \cap E_h'| = d$, we infer that~$|S' \cap E_0| \le p$, and,
consequently,~$I_0$ is a YES-instance.

\textbf{Case 2.2.}
Since~$S'$ has a non-empty intersection with one of the sets~$E_h$,~$E_{h+1}$, by symmetry
we assume that~$S' \cap E_h \not=\emptyset$.
Recall that~$t \ge 5$, and hence~$h > 1$. 
Since~$S'$ hits all~$s^1_3s^{h+1}_3$-paths and all~$s^1_3s^{t+1-h}_3$-paths in~$G'$,
we infer that in~$G_0 \setminus S'$ there is no~$s^0_1s^0_2$-path and
in~$G_h \setminus S'$ there is no~$s^h_1s^h_2$-path.
Moreover,~$S'$ hits all~$s^0_3s^{h+1}_3$-paths and all~$s^h_3s^{t-1}_3$-paths in~$G'$;
therefore, in~$G_0 \setminus S'$ there is no~$s^0_3s^0_1$-path and
in~$G_h \setminus S'$ there is no~$s^h_3s^h_2$-path.
Finally since~$S'$ hits all~$s^0_3s^h_3$-paths in~$G'$, either in~$G_0 \setminus S'$
there is no~$s^0_3s^0_2$-path, or in~$G_h\setminus S'$ there is no~$s^h_3s^h_1$-path.
Since~$|S'| \le p+d$,~$|S' \cap E_0| \ge d$ and~$|S' \cap E_h| \ge d$,
we infer that~$|S' \cap E_0| \le p$ and~$|S' \cap E_h| \le p$.
Consequently, either~$I_0$ or~$I_h$ is a YES-instance, which finishes the proof of Theorem~\ref{thm:cross-multicut-full}.
\end{proof}

\section{Alternative cross-composition of Multicut}\label{sec:multicut2}

In this section we present an alternative proof of cross-composition
to \emulticut parameterized by the size of the cutset.
Despite the fact that the cross-composition presented in this section
is more involved comparing to the one presented in Section~\ref{sec:multicut-full},
we find the ideas used here more general, which might be helpful in designing future cross-compositions for other problems.

\begin{theorem}
\label{thm:cross-multicut2}
\prepmc cross-composes into \emulticut parameterized the size of the cutset~$p$.
\end{theorem}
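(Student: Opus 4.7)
The plan is to replace the single cyclic selector of Theorem~\ref{thm:cross-multicut-full} by a bit-wise selector made of $h = \lceil \log_2 t \rceil$ independent gadgets, one per bit of the instance index. This is a more canonical ``selector'' construction that exposes $\log t$ binary choices, and, as promised at the start of this section, the design principle should be transferable to other cut problems. I would reuse the polynomial equivalence relation of Theorem~\ref{thm:cross-multicut-full}, grouping instances by $p$ and by the common terminal degree $d$, and pad the family by duplicating instances so that $t = 2^h$ is a power of two. Each $G_i$ is then labelled by its binary representation $(b_1(i), \ldots, b_h(i))$.

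For the construction I would start from the disjoint union of all $G_i$, and for each bit position $j \in \{1,\ldots,h\}$ attach a small gadget $D_j$ carrying a fresh terminal pair $(\alpha_j, \beta_j)$. A routing function $\ell(b,j) \in \{1,2\}$ chooses, for each bit value $b$ and position $j$, one of the two side-terminals $s_1^i, s_2^i$; the gadget $D_j$ is attached to every instance $G_i$ through $s_{\ell(b_j(i),j)}^i$ (e.g.\ via $d$ parallel edges per instance suitably bundled) in such a way that, locally, the only minimum $(\alpha_j, \beta_j)$-cuts of cost $d$ are two canonical cuts, each of which ``supplies'' the entire terminal-bundle $\delta(s_{\ell(b,j)}^i)$ to every instance $i$ with $b_j(i)=b$. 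A constant number of additional global terminal pairs, analogous in spirit to the extra $d$-edges between consecutive instances in Theorem~\ref{thm:cross-multicut-full}, would ensure that the third terminal $s_3^i$ is also supplied externally, but only for the unique instance matching every selector bit. The final parameter would be $p^\ast = p + \Theta(hd)$, which is bounded polynomially in $\max_i |x_i| + \log t$ as required.

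The forward direction (YES $\Rightarrow$ YES) is essentially by construction: given a YES-instance $I_{i^\ast}$ with cut $S$ of size at most $p$, one picks bit $b_j(i^\ast)$ at every selector $D_j$, adds the canonical global cuts, and combines them with $S$ to form a multicut of size $p^\ast$. The backward direction is the main obstacle. Starting from a solution $S^\ast$ of size at most $p^\ast$, I would use an uncrossing/exchange argument to normalize $S^\ast$ so that (i) within each $D_j$ it consists of exactly one of the two canonical $d$-cuts, and (ii) within each $G_i$ it is a union of full bundles $\delta(s_\ell^i)$. The two crucial ingredients for this normalization are \prepmc condition (ii), which forces any cheap $s_\ell^i$-isolating cut to be exactly the bundle $\delta(s_\ell^i)$, and \prepmc condition (iii) $p < 2d$, which rules out that $S^\ast$ pays two full bundles inside more than one instance. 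Once $S^\ast$ is normalized, the selected bits identify a unique instance $i^\ast$ whose three bundles are all supplied externally, and the remaining edges of $S^\ast$ inside $G_{i^\ast}$ must form a multiway cut of $G_{i^\ast}$ of size at most $p$, giving a YES-instance of \prepmc.

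The technical heart will be designing $D_j$ and the routing function $\ell$ so that the two canonical cuts are genuinely the only cheap options for $(\alpha_j, \beta_j)$, and so that the bundles supplied across the $h$ selectors are coherent (meaning precisely one instance is ``fully covered''). Once the local gadget structure is right, the global counting is tight because the slack $p^\ast - p = \Theta(hd)$ is entirely consumed by the $h + O(1)$ gadgets, leaving at most $p$ edges of $S^\ast$ available inside any single instance.
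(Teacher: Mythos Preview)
Your proposal diverges from the paper's actual proof and, more importantly, leaves the hard part undone.

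The paper does not use bit-wise selectors at all. It equips the composed instance with two cycles $C_1,C_2$ of length $3(t+1)$ each, whose edges have weight $M=p+1$ except for three undeletable $M_\infty$-edges per cycle; it attaches $s_1^i$ to a vertex $y^1_i$ of $C_1$ and $s_2^i$ to a vertex $x^2_i$ of $C_2$ by $M_\infty$-edges; and it takes as terminal pairs all pairs $(c^j_i,c^j_{(i+t+1)\bmod(3t+3)})$ on each cycle together with all pairs inside each triple $\{s_3^i,x^1_i,y^2_i\}$. The budget is $p'=6M+p$, depending only on $p$ (not on $\log t$). The cycle pairs force any solution to delete exactly three equidistant $M$-edges on each cycle; these six cuts determine one index per cycle, a case analysis shows the two indices must agree at some $i_0$, and the triple $\{s_3^{i_0},x^1_{i_0},y^2_{i_0}\}$ then forces a multiway cut of $G_{i_0}$ using the remaining $\le p$ unit-weight edges. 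So the selector here is the rotational position of a triple of equidistant cuts on a long weighted cycle, not a family of $\log t$ binary gadgets.

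Your outline could perhaps be completed into an alternative proof, but as written it is not one: the gadget $D_j$ is never constructed, and the properties you demand of it are asserted rather than achieved. The phrase ``supplies the entire terminal-bundle $\delta(s^i_{\ell(b,j)})$'' has no clear meaning --- $\delta(s^i_\ell)$ consists of $d$ edges \emph{inside} $G_i$, and a single cut of size $d$ located in a selector attached to $t/2$ instances cannot stand in for $t/2$ such bundles. More critically, you never specify which terminal pairs force the three-way separation of $s_1^{i^\ast},s_2^{i^\ast},s_3^{i^\ast}$ in the selected instance, nor why those same pairs impose no such constraint on any other $G_i$; this is precisely what the paper's triples $\{s_3^i,x^1_i,y^2_i\}$ together with the cycle geometry accomplish, and in a binary-selector design it is the entire difficulty, not a detail to be filled in later. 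Finally, the claim that each $D_j$ has exactly two minimum $(\alpha_j,\beta_j)$-cuts of cost $d$ is not a local property once all the $D_j$'s and $G_i$'s are wired together through the other $h-1$ selectors: $(\alpha_j,\beta_j)$-paths can traverse arbitrary instances, so the rigidity and uncrossing you rely on require a global argument that you have not provided.
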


\begin{proof}
We start by defining a relation~$\prel$ on \prepmc instances,
which groups instances according to the size of the cutset~$p$.
Formally~$(G,T,p)$ is in relation~$\prel$ with~$(G',T',p')$ iff~$p=p'$.
Clearly,~$\prel$ is a polynomial equivalence relation.
Hence we assume that we are given~$t \ge 1$ instances~$I_i=(G_i,T_i=\{s^i_1,s^i_2,s^i_3\},p)$,
for~$1 \leq i \leq t$, of the \prepmc problem (note that we number instances starting from~$1$).

\paragraph{Construction.}
Let~$M=p+1$ and~$M_{\infty}=6M+p+1$. In our construction
each edge of \emulticut instance will have one of three possible weights~$\{1,M,M_\infty\}$.
We can implement those weights by putting~$1$,~$M$ or~$M_{\infty}$ parallel
edges and subdividing them to obtain a simple graph (note that both~$M$ and~$M_\infty$ are polynomially bounded in~$p$).
Initially as the graph~$G'$ we take a disjoint union of two cycles~$C_1$,~$C_2$, each containing exactly~$3(t+1)$ vertices.
To simplify presentation, each vertex on each of the two cycles 
has two different names, that is~$C_j=(x^j_0,x^j_1,\ldots,x^j_t,y^j_1,y^j_2,\ldots,y^j_t,y^j_0,z^j_0,z^j_1,\ldots,z^j_{t},x^j_0)$ for~$j=1,2$,
and at the same time~$C_j=(c^j_0,\ldots,c^j_{3t+2},c^j_0)$, where~$c^j_0=x^j_0$  (see Fig.~\ref{fig:multicut2}); note the position of~$y^j_0$ between~$y^j_t$ and~$z^j_0$ in favor of a uniform adjacency to the instances later.
We set weights of each edge on the cycle as~$M$, except for three edges~$z^j_tx^j_0$,~$x^j_ty^j_1$,~$y^j_0z^j_0$,
which have weight~$M_{\infty}$.
For each~$i=1,\ldots,t$ we add to~$G'$ the graph~$G_i$ (with edges of weight~$1$), and connected~$s^i_1$ with~$y^1_i$ by an edge of weight~$M_\infty$,
and also add an edge of weight~$M_\infty$ between~$s^i_2$ and~$x^2_i$ (see Fig.~\ref{fig:multicut2}).
To the set~$\cT$ we add the following pairs:
\begin{enumerate}
  \item for each~$j=1,2$ and~$i=0,\ldots,3t+2$, add to~$\cT$ the pair~$(c^j_i,c^j_{(i+t+1)\bmod 3t+3})$,
  \item for each~$i=1,\ldots,t$ add to~$\cT$ every pair of vertices from the set~$\{s^i_3, x^1_i, y^2_i\}$.
\end{enumerate}
Finally as the target cutset size we set~$p'=6M+p$, which is polynomially bounded in~$p$.
Our constructed instance of \emulticut is~$I'=(G',\cT',p')$.

\begin{figure}[htp]
\begin{center}
\includegraphics[scale=1]{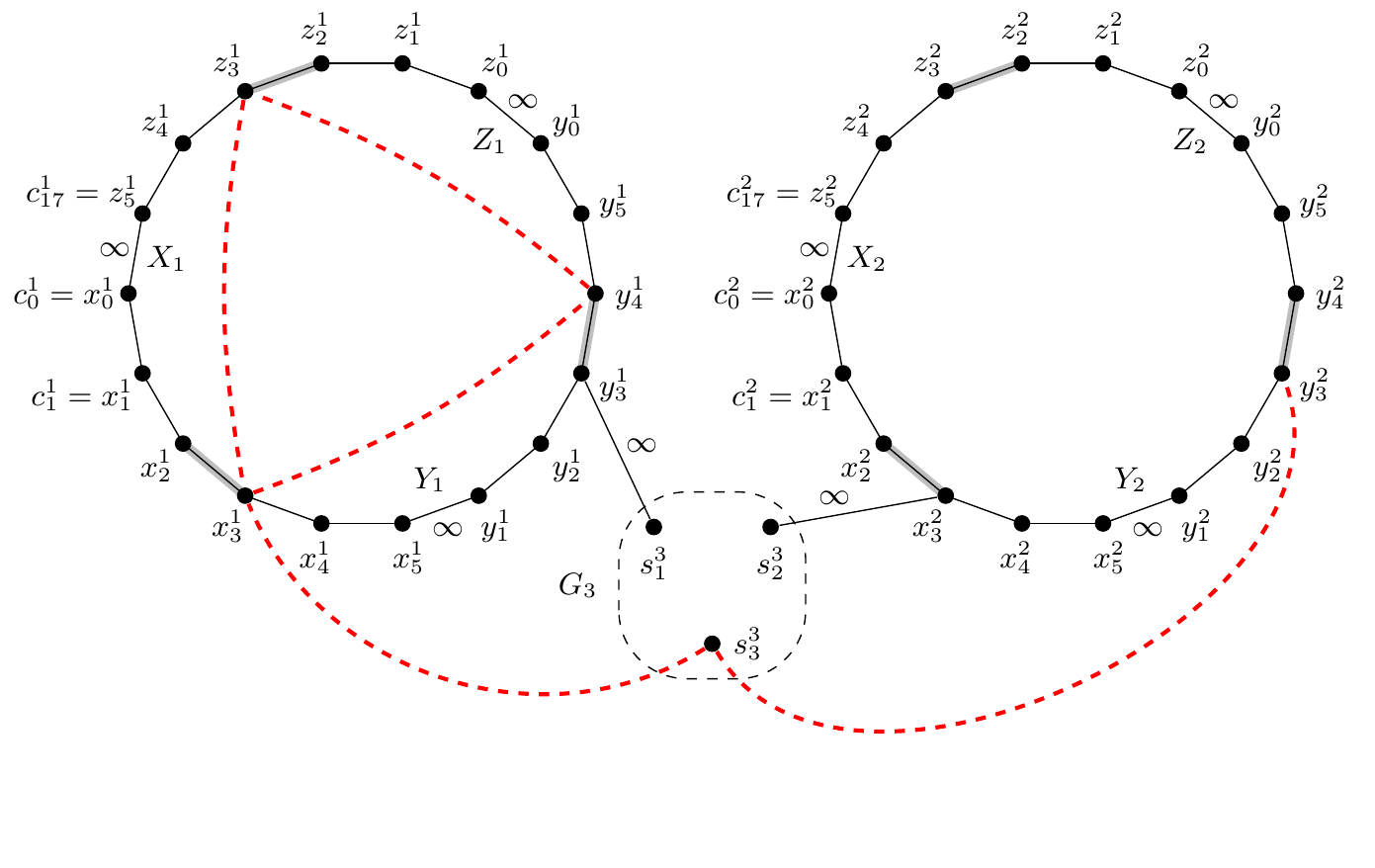}
\caption{Construction of the graph~$G'$ for~$t=5$. Dashed edges represent some of the pairs in the set~$\cT$,
in particular the pair~$(x^1_3,y^2_3) \in \cT$, but it is not depicted to simplify the figure.
Gray-marked edges belong to the set~$S'$ constructed in the proof of Theorem~\ref{thm:cross-multicut2}.}
\label{fig:multicut2}
\end{center}
\end{figure}

\paragraph{Analysis.}
First let us assume, that there is an index~$1 \le i_0 \le t$, such that~$I_{i_0}$ is a YES-instance,
and let~$S$ by any solution for~$I_{i_0}$.
We show that~$I'$ is also a YES-instance.
Let~$S':=S$ and add to~$S'$ the~$6$ edges~$c^j_{i_0-1+r(t+1)}c^j_{i_0+r(t+1)}$ for~$j=1,2$ and~$r=0,1,2$ (see gray-marked
edges in Fig.~\ref{fig:multicut2}).
Note that each of those~$6$ edges has weight exactly~$M$.
We claim that~$S'$ is a solution for~$I'$. First, observe that the total weight of edges in~$S'$
is at most~$6M+p=p'$.
Let us analyze how connected components in~$G' \setminus S'$ look like.
For~$1 \le j \le 2$, let~$X_j$,~$Y_j$,~$Z_j$ be the connected components of~$G' \setminus S'$
(not necessarily pairwise different), containing the edges~$z^j_tx^j_0$,~$x^j_ty^j_1$, and~$y^j_0z^j_0$
respectively.
Note that those are the edges of weight~$M_\infty$ and they do not belong to~$S'$.
Next consider each graph~$G_i$, for~$i=1,\ldots,t$, and see which parts of the cycles~$C_1$,~$C_2$
it connects.
For~$1 \le i < i_0$, the graph~$G_i$ connects~$Y_1$ with~$X_2$.
For~$i_0 < i \le t$, the graph~$G_i$ connects~$Z_1$ with~$Y_2$.
Finally the graph~$G_{i_0} \setminus S'$ does not connect any connected components,
since in~$G_{i_0} \setminus S'$ each of the three terminals of~$T_i$ is in a different connected component.
Therefore we know that for each~$A \in \{X_1\}, B \in \{Y_1,X_2\}, C\in\{Z_1,Y_2\}, D\in\{Z_2\}$
the connected components~$A$,~$B$,~$C$,~$D$ are pairwise different.
We analyze all the pairs in the set~$\cT$ and argue that the two elements of each pair are in different connected components in~$G' \setminus S'$.
Consider any~$j=1,2$. Note that the connected components~$X_j$,~$Y_j$,~$Z_j$ are pairwise different,
and hence for~$i=0,\ldots,3t+2$ the vertex~$c^j_i$ is in a different connected component that~$c^j_{(i+t+1)\bmod 3t+3}$ in~$G'\setminus S'$.
Therefore it is enough to analyze pairs~$(s^i_3, x^1_i), (s^i_3, y^2_i), (x^1_i, y^2_i) \in \cT$ for~$i=1,\ldots,t$.
\begin{enumerate}
  \item For~$1 \le i < i_0$, we have~$s^i_3 \in Y_1=X_2$, while~$x^1_i \in X_1$ and~$y^2_i \in Y_2=Z_1$.
  \item For~$i_0 < i \le t$, we have~$s^i_3 \in Z_1=Y_2$, while~$x^1_i \in Y_1=X_2$ and~$y^2_i \in Z_2$.
  \item For~$i=i_0$, the vertex~$s^i_3$ is in different connected component than all the vertices in the cycles~$C_1$,~$C_2$,
  since in~$G_{i_0} \setminus S'$ there are no~$s^i_3s^i_1$-paths, nor~$s^i_3s^i_2$-paths.
  At the same time~$x^1_i \in Y_1=X_2$ and~$y^2_i \in Y_2=Z_1$.
\end{enumerate}
Consequently for each pair~$(s,t) \in \cT$, vertices~$s$ and~$t$ are in different connected components of~$G' \setminus S'$,
therefore~$I'$ is a YES-instance.

In the other direction, let us assume that~$I'$ is a YES-instance and let~$S'$ be any solution for~$I'$.
Observe that out of each~$t+1$ consecutive edges on a cycle~$C_j$, for~$j=1,2$, the set~$S'$ has to contain at least one edge,
since otherwise there would be a pair of vertices~$(c^j_{i},c^j_{(i+1+1) \bmod 3t+3}) \in \cT$ belonging to the same connected component of~$G' \setminus S'$.
However,~$|C_j|=3(t+1)$ and therefore the set~$S'$ contains at least three edges of~$C_j$.
Moreover~$7M > p'$, hence~$S'$ contains exactly three equidistant edges out of each cycle~$C_j$,
since otherwise there would be~$t+1$ consecutive edges not belonging to~$S'$. 
Since~$M_\infty > p'$ there are exactly~$t$ layouts of three equidistant edges in each~$C_j$ which 
do not contain any edge of weight~$M_\infty$.
Observe that because of the way we labeled vertices on each cycle,
there exists an index~$i_j$, such that~$x^j_i$ and~$y^j_i$ are in
the same connected component of~$G' \setminus S'$, namely~$1 \le i_j \le t$,
such that~$S'$ contains the edge~$x^j_{i_j-1}x^j_{i_j}$.

We have to consider two cases, either~$i_1=i_2$ or~$i_1 \not= i_2$.
Let~$S'' \subseteq S'$ be the subset of edges of~$S'$ of weight~$1$.
Note that~$|S''| \le p$.

\textbf{Case 1 ($\boldsymbol{i_1\neq i_2}$).} 
Since~$(s^{i_1}_3,x^1_{i_1}) \in \cT$, and by the fact 
that~$x^1_{i_1}$ and~$y^1_{i_1}$ are in the same connected component of~$G'\setminus S'$,
there is no~$s^{i_1}_3s^{i_1}_1$-path in~$G_i \setminus S''$.
By properties (i), (ii), (iii) of \prepmc problem definition, we infer~$|S'' \cap E(G_{i_1})| > p/2$,
since for~$d=\deg_{G_i}(s^{i_1}_1)$ we have~$|S'' \cap E(G_{i_1})| \ge d$ and~$2d > p$.
Analogously, since~$(s^{i_2}_3,y^2_{i_2}) \in \cT$, we have~$|S'' \cap E(G_{i_2})| > p/2$, but then~$|S''| > p$, a contradiction.

\textbf{Case 2 ($\boldsymbol{i_1= i_2}$).} 
Let~$i_0=i_1=i_2$.
By the definition of~$\cT$,
each pair of vertices of~$\{s^{i_0}_3,x^1_{i_0},y^2_{i_0}\}$ belongs
to different connected component of~$G' \setminus S'$.
Observe that~$s^{i_0}_1$,~$y^1_{i_0}$,~$x^1_{i_0}$ are in the same connected component of~$G' \setminus S'$,
since the edge~$s^{i_0}_1y^1_{i_0}$ is of weight~$M_\infty$, and for this reason it does not belong to~$S'$.
Similarly~$s^{i_0}_2$,~$x^2_{i_0}$,~$y^2_{i_0}$  are in the same connected component of~$G' \setminus S'$.
Therefore there is no path between any pair of vertices of~$T_{i_0}$ in~$G_{i_0} \setminus S''$,
and since~$|S''| \le p$ we infer that~$I_{i_0}$ is a YES-instance.
\end{proof}

\section{$k$-Way Cut}\label{sec:k-way-full}

In this section we study the following graph separation problem.

\defproblemnoparam{\kwaycut}{An undirected connected
  graph $G$ and integers $k$ and $s$.}{Does there exist a set $X$ of at most $s$ edges in $G$ such that $G \setminus X$ has at least $k$ connected components?}

The \kwaycut problem, parameterized by $s$, was proven to be fixed-parameter tractable
by Kawarabayashi and Thorup \cite{ken:kwaycut}.
The problem is $W[1]$-hard when parameterized by $k$ \cite{kway:w1hard}, as well as when
we allow vertex deletions instead of edge deletions, and parameterize by $s$ \cite{marx:cut}.

Note that in the problem definition we assume that the input graph is connected and, therefore,
for $k > s+1$ the input instances are trivial. However, if we are given an instance $(G,k,s)$
where $G$ has $c > 1$ connected components, we can easily reduce it to the connected version:
we add to $G$ a complete graph on $s+2$ vertices (so that no two vertices of the complete graph
can be separated by a cut of size $s$), connect one vertex from each connected component of $G$ to all vertices of the
complete graph, and decrease $k$ by $c-1$. Thus, by restricting ourselves to connected graphs $G$ we do not make the problem easier.

The main result of this section is that \kwaycut, parameterized by $s$, does not admit a polynomial kernel
(unless \unlesscompass).
We show a cross-composition from the \clique problem, well-known to be NP-complete.

\defproblemnoparam{\clique}{An undirected graph $G$ and an integer $\ell$.}{
  Does $G$ contain a clique on $\ell$ vertices as a subgraph?}

\begin{theorem}\label{thm:kway-main-full}
\clique cross-composes to \kwaycut parameterized by $s$.
\end{theorem}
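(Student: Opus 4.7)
The plan is to embed the \clique question into an $(\ell+1)$-way cut problem through a single padded graph. For the polynomial equivalence relation $\mathcal{R}$ we declare two \clique instances $(G_1,\ell_1)$ and $(G_2,\ell_2)$ equivalent iff $|V(G_1)|=|V(G_2)|$ and $\ell_1=\ell_2$; malformed inputs go into one trivial class. Fixing $t$ equivalent instances $G_1,\ldots,G_t$ on $n$ vertices each, we pick a degree parameter $D$ to be a sufficiently large polynomial in $n$ (say $D=2n^2$) and build $G^\ast$ by (i) taking the disjoint union of the $G_i$, (ii) adding a padding clique $K$ on $N:=\ell D-\binom{\ell}{2}+2$ fresh vertices, and (iii) for each $v\in V(G_i)$ adding $D-\deg_{G_i}(v)$ edges from $v$ to distinct vertices of $K$, so that every vertex $v$ of some $G_i$ has degree exactly $D$ in $G^\ast$. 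We set $k^\ast:=\ell+1$ and $s^\ast:=\ell D-\binom{\ell}{2}$, both polynomial in $n$ as required.

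The forward direction is immediate: a clique $S$ of size $\ell$ in some $G_{i_0}$ allows us to delete all edges incident with $S$ in $G^\ast$, which amounts to exactly $\ell D-\binom{\ell}{2}=s^\ast$ edges, leaves every $v\in S$ as a singleton, while the rest of $G^\ast$ (all other $G_j$'s together with the padding clique $K$) remains connected through $K$; this gives precisely $\ell+1$ components.

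For the backward direction, let $F$ be an $(\ell+1)$-way cut with $|F|\leq s^\ast$ and components $C_0,\ldots,C_\ell$. The first step will be to show that $V(K)$ lies in one component, say $C_0$: any bipartition of $V(K)$ into sides of sizes $a$ and $N-a$ contributes at least $a(N-a)\geq N-1>s^\ast$ cut edges, contradicting the budget. Setting $T:=V(G^\ast)\setminus C_0\subseteq \bigcup_i V(G_i)$, we have $|F|\geq |E_{G^\ast}(T,C_0)|=|T|D-2e_{G^\ast}(T)$, where we use that every $v\in T$ has degree exactly $D$ in $G^\ast$. Using $e_{G^\ast}(T)\leq \sum_i \binom{|T\cap V(G_i)|}{2}\leq \binom{|T|}{2}$ (edges inside $T$ are only $G_i$-edges, since there are none between distinct $G_i$'s nor through $K$), the inequality $|F|\leq s^\ast$ will force, for $D$ large enough in terms of $n$, the equality $|T|=\ell$. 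Each $C_j$ with $j\geq 1$ is then a singleton, and a direct count gives $|F|=\ell D-e_{G^\ast}(T)$; combined with $|F|\leq \ell D-\binom{\ell}{2}$ this yields $e_{G^\ast}(T)\geq \binom{\ell}{2}$, which together with the ``no edges between distinct instances'' property forces $T$ to lie in a single $V(G_{i_0})$ and form a clique of size $\ell$ there, the desired YES-instance.

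The main obstacle is the quantitative step forcing $|T|=\ell$. Each additional vertex placed in $T$ increases the lower bound $|T|D-2e_{G^\ast}(T)$ on $|F|$ by at least $D-2|T|$, so already $D>2\ell+\binom{\ell}{2}$ eliminates the possibility $|T|=\ell+1$, and an analogous concavity-style estimate handles all larger $|T|$; the regime $|T|>n$, where $T$ necessarily spans several instances, is in fact even easier because the sharper bound $e_{G^\ast}(T)\leq \sum_i\binom{|T\cap V(G_i)|}{2}$ becomes strictly smaller than $\binom{|T|}{2}$. With these ingredients in place, Theorem~\ref{thm:cross-composition} together with the NP-completeness of \clique delivers the claimed kernelization lower bound.
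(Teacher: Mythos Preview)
Your approach is correct and genuinely different from the paper's.

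The paper turns each $G_i$ into a \emph{weighted complete graph} $G_i'$ (weight $w_1+1$ on edges of $G_i$, weight $w_1$ on non-edges), hangs all of them off a single root $r$ via heavy edges of weight $w_2$, and sets $k=n-\ell+1$: a solution must isolate the $n-\ell$ vertices \emph{outside} a clique, and a three-level weight hierarchy $w_2\gg w_1\gg 1$ forces the isolated set into one $G_i'$ and then forces the surviving $\ell$-set to span $\binom{\ell}{2}$ original edges. A separate final step simulates the weights by vertex-blowups. You instead keep the $G_i$ unchanged, regularize every original vertex to degree exactly $D$ by padding into a large clique $K$, and set $k^\ast=\ell+1$: a solution isolates the $\ell$ clique vertices \emph{themselves}. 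The weight hierarchy is replaced by two elementary counts: $K$ is too dense to be split within budget, and the uniform degree $D$ turns the cut size into $|T|D-2e_{G^\ast}(T)$, after which $e_{G^\ast}(T)\le\sum_i\binom{|T\cap V(G_i)|}{2}\le n|T|$ forces $|T|=\ell$ once $D=2n^2$. Your route avoids weights (and hence the unweighting step) and, since your budget $s^\ast$ does not depend on $|E(G_i)|$, you can also drop the condition $|E(G_i)|=m$ from the equivalence relation that the paper needs.

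Two small fixes: (i) ``$|F|=\ell D-e_{G^\ast}(T)$'' should read $|F|\ge\ell D-e_{G^\ast}(T)$, since $F$ may contain superfluous edges inside $C_0$; the conclusion $e_{G^\ast}(T)\ge\binom{\ell}{2}$ is unaffected. (ii) In the forward direction, ``all other $G_j$'s together with $K$'' must also include the remnant $V(G_{i_0})\setminus S$; this is fine because each such vertex keeps at least $D-(n-1)>0$ padding edges to $K$ after $S$ is isolated.
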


\begin{proof}
We start by defining a relation $\prel$ on \clique input instances as follows:
$(G,\ell)$ is in relation $\prel$ with $(G',\ell')$ if $\ell = \ell'$, $|V(G)| = |V(G')|$
and $|E(G)| = |E(G')|$. Clearly, $\prel$ is a polynomial equivalence relation.
Thus, in the designed cross-composition,
  we may assume that we are given $t$ instances $(G_i,\ell)$ ($1 \leq i \leq t$)
of the \clique problem and $|V(G_i)| = n$, $|E(G_i)| = m$ for all $1 \leq i \leq t$. 
Moreover, we assume that $m \geq \binom{\ell}{2}$ and $1 < \ell \leq n$,
  as otherwise all input instances are trivial.

We first consider a weighted version of the \kwaycut problem where
each edge may have a positive integer weight and the cutset $X$
needs to be of total weight at most $s$. The weights in our construction
are polynomial in $n$ and $m$. At the end we show how to reduce the weighted
version to the unweighted one.

We start by defining $k = n-\ell+1$, $w_1 = m$, $w_2 = m\binom{n}{2}$ and
$s = w_2(n-\ell) + w_1\left(\binom{n}{2} - \binom{\ell}{2}\right) + m - \binom{\ell}{2}$.
Note that $s < w_2(n-\ell+1)$ and $s < w_2(n-\ell) + w_1(\binom{n}{2} - \binom{\ell}{2}+1)$.

For each graph $G_i$, $1 \leq i \leq t$, we define a graph
$G_i'$ as a complete graph on $n$ vertices with vertex set $V(G_i)$,
  where the edge $uv$ has weight $w_1+1$ if $uv \in E(G_i)$ and
  weight $w_1$ otherwise.
We construct a graph $G$ as follows. We take
a disjoint union of all graphs $G_i'$ for $1 \leq i \leq t$,
add a root vertex $r$ and for each $1 \leq i \leq t$, $v \in V(G_i')$
we add an edge $rv$ of weight $w_2$.

Clearly $G$ is connected, $s$ is polynomial in $n$ and $m$ and the graph $G$ can be constructed
in polynomial time. We claim that $(G,k,s)$ is a weighted \kwaycut YES-instance
if and only if one of the input \clique instances $(G_i,\ell)$ is a YES-instance.

First, assume that for some $1 \leq i \leq t$, the \clique instance $(G_i,\ell)$ is
a YES-instance. Let $C \subseteq V(G_i)$ be a witness: $|C| = \ell$ and $G_i[C]$
is a clique. Consider a set $X \subseteq E(G)$ containing all edges of $G$ incident
to $V(G_i') \setminus C$. Clearly, $G \setminus X$ contains $k = n-\ell+1$ connected components:
we have one large connected component with vertex set $(V(G) \setminus V(G_i')) \cup C$
and each of $n-\ell$ vertices of $V(G_i') \setminus C$ is an isolated vertex in $G \setminus X$.
Let us now count the total weight of edges in $X$.
$X$ contains $n - \ell$ edges of weight $w_2$ that connect $V(G_i') \setminus C$ to the root $r$.
Moreover, $X$ contains $\binom{n}{2} - \binom{\ell}{2}$ edges of $G_i'$, of weight $w_1$
or $w_1+1$.
Since $G_i[C]$ is a clique, only $m - \binom{\ell}{2}$ of the edges in $X$ are of weight
$w_1+1$. Thus the total weight of edges in $X$ is equal to
$w_2(n - \ell) + w_1\left(\binom{n}{2} - \binom{\ell}{2}\right) + m - \binom{\ell}{2} = s$.

In the other direction, let $X \subseteq E(G)$ be a solution to the \kwaycut instance
$(G,k,s)$. Let $Z$ be the connected component of $G \setminus X$ that contains
the root $r$.
Let $Y \subseteq V(G)$ be the set of vertices that are not in $Z$.
If $v \in Y$, $X$ contains the edge $rv$ of weight $w_2$.
As $s < w_2(n-\ell+1)$, we have $|Y| \leq n-\ell$.
As $k = n - \ell+1$, we infer that $G \setminus X$ contains $n-\ell+1$ connected components:
$Z$ and $n-\ell$ isolated vertices.
That is, $|Y| = n - \ell$ and all vertices in $Y$ are isolated in $G \setminus X$.
Note that $X$ includes $n-\ell$ edges of weight $w_2$ that connect the root $r$ with the vertices of $Y$.

The next step is to prove that all vertices of $Y$ are contained in one of the graphs $G_i'$.
To this end, let $a_i = |Y \cap V(G_i')|$ for $1 \leq i \leq t$.
Note that $X \cap E(G_i')$ contains at least $\binom{a_i}{2} + a_i(n-a_i)$ edges of weight $w_1$ or $w_1+1$.
Thus, the number of edges of weight $w_1$ or $w_1+1$ contained in $X$ is at least:
\begin{align*}
&\sum_{i=1}^t \binom{a_i}{2} + a_i(n-a_i) = \left(n-\frac{1}{2}\right)\sum_{i=1}^t a_i - \frac{1}{2} \sum_{i=1}^t a_i^2 = (n-\ell)\left(n-\frac{1}{2}\right) - \frac{1}{2} \sum_{i=1}^t a_i^2 \\
    &\qquad \geq (n-\ell)\left(n-\frac{1}{2}\right) -\frac{1}{2} \left(\sum_{i=1}^t a_i\right)^2 = (n-\ell)\left(n-\frac{1}{2}\right) - \frac{1}{2} (n-\ell)^2 = \binom{n}{2} - \binom{\ell}{2}.
\end{align*}
As $s < w_2(n-\ell) + w_1\left(\binom{n}{2} - \binom{\ell}{2}+1\right)$, we infer that the number of edges in $X$ of weight $w_1$ or $w_1+1$ is exactly $\binom{n}{2} - \binom{\ell}{2}$.
This is only possible if $\sum_{i=1}^t a_i^2 = (\sum_{i=1}^t a_i)^2$. As $a_i$ are nonnegative integers, we infer that only one value $a_i$ is positive.

Thus $Y \subseteq V(G_i')$ for some $1 \leq i \leq t$. Let $C = V(G_i) \setminus Y$. Note that $|C| = \ell$.
The set $X$ contains all $\binom{n}{2} - \binom{\ell}{2}$ edges of $G_i'$ that are incident to $Y$.
As the total weight of the edges of $X$ is at most $s$, $X$ contains at most $m - \binom{\ell}{2}$ edges of weight $w_1+1$.
We infer that there are at least $\binom{\ell}{2}$ edges in the graph $G_i[C]$, $G_i[C]$ is a clique and $(G_i,\ell)$ is a YES-instance of the \clique problem.

To finish the proof, we show how to reduce the weighted version of the \kwaycut problem to the unweighted one.
We replace each vertex $u$ with a complete graph $H_u$ on $s+2$ vertices
and for each edge $uv$ of weight $w$ we add to the graph $w$ arbitrarily chosen edges between $H_u$ and $H_v$ (note that in our construction all weights are smaller than $s$).
Note that this reduction preserves the connectivity of the graph $G$.
Let $X$ be a solution to the unweighted instance $(G,k,s)$ constructed in this way.
As no cut of size at most $s$ can separate two vertices of $H_u$, each clique $H_u$ is contained in one connected component of $G \setminus X$.
Moreover, to separate $H_u$ from $H_v$, $X$ needs to include all $w$ edges between $H_u$ and $H_w$. Thus, the constructed unweighted instance is
indeed equivalent to the weighted one.
Note that in the presented cross-composition the edge weights were polynomial in $n$ and $m$, so the presented reduction can be performed in polynomial time.
\end{proof}

By applying Theorem \ref{thm:cross-composition} we obtain the following corollary.
\begin{corollary}
\kwaycut parameterized by $s$ does not admit a polynomial kernel unless \unlesscompass.
\end{corollary}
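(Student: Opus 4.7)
The plan is to cross-compose from \clique via an instance-selector gadget centred at a single root vertex, with carefully calibrated edge weights. For the polynomial equivalence relation~$\prel$, I declare two \clique instances~$(G,\ell)$ and~$(G',\ell')$ equivalent iff~$\ell = \ell'$,~$|V(G)| = |V(G')|$ and~$|E(G)| = |E(G')|$, so I may assume the input is a sequence~$(G_i,\ell)_{i=1}^t$ sharing common parameters~$n$ and~$m$. I will first build a weighted \kwaycut instance and, at the very end, reduce it to the unweighted version by blowing up each vertex~$u$ into a clique on~$s+2$ vertices (so no cut of weight~$\leq s$ can split it) and replacing each weight-$w$ edge by~$w$ parallel edges between the corresponding blow-ups.

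For each~$i$, let~$G_i'$ be the complete graph on~$V(G_i)$ in which edges of~$E(G_i)$ receive weight~$w_1+1$ while the remaining~$\binom{n}{2}-m$ ``non-edge'' slots receive weight~$w_1$. Take the disjoint union of the~$G_i'$, add a root~$r$, and join~$r$ to every vertex of each~$G_i'$ by a weight-$w_2$ edge. Set~$k := n-\ell+1$ and
\[
s := w_2(n-\ell) + w_1\left(\binom{n}{2} - \binom{\ell}{2}\right) + \left(m - \binom{\ell}{2}\right),
\]
with the constants chosen, roughly,~$w_1 \approx m$ and~$w_2 \approx m\binom{n}{2}$, so that~$s < w_2(n-\ell+1)$ and~$s < w_2(n-\ell) + w_1\bigl(\binom{n}{2} - \binom{\ell}{2}+1\bigr)$. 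The forward direction is direct book-keeping: given an~$\ell$-clique~$C$ in some~$G_i$, cut all edges incident to~$V(G_i')\setminus C$; this isolates exactly~$n-\ell$ vertices from~$r$ and the total weight equals~$s$.

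The reverse direction is the main work. Given a valid cut~$X$, let~$Y$ be the vertex set separated from~$r$. The root-edges alone force~$|Y|\cdot w_2 \leq s$, hence~$|Y| \leq n - \ell$; together with the requirement of~$k = n - \ell + 1$ components this gives~$|Y| = n-\ell$ with all vertices of~$Y$ isolated. Writing~$a_i := |Y \cap V(G_i')|$ with~$\sum_i a_i = n - \ell$, each~$G_i'$ contributes at least~$\binom{a_i}{2} + a_i(n-a_i)$ weight-$\geq w_1$ edges to~$X$; the sum simplifies to~$(n-\ell)(n-\tfrac12) - \tfrac12 \sum_i a_i^2$. The elementary inequality~$\sum_i a_i^2 \leq \bigl(\sum_i a_i\bigr)^2$ for nonnegative integers (tight iff at most one~$a_i$ is nonzero) then yields the lower bound~$\binom{n}{2} - \binom{\ell}{2}$, and the strict calibration of~$s$ forces~$Y$ to be concentrated in a single copy~$G_{i_0}'$. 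Finally, the residual slack of~$m - \binom{\ell}{2}$ caps the number of weight-$(w_1+1)$ edges in~$X$, which is exactly the number of edges of~$G_{i_0}$ touching~$Y$; consequently~$G_{i_0}$ retains at least~$\binom{\ell}{2}$ edges inside~$V(G_{i_0})\setminus Y$, so this set is an~$\ell$-clique.

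The hard part will be pinning down~$w_1$ and~$w_2$ so that all three ``rigidity'' conclusions (exactness of~$|Y|$, concentration in one copy, cap on cut real-edges) are simultaneously enforced by a single budget~$s$, while keeping every weight polynomial in~$n+m$ so that the final parallel-edge blow-up runs in polynomial time. Once that numerical calibration is set, the argument reduces to the counting/convexity exercise above.
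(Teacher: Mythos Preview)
Your proposal is correct and follows essentially the same approach as the paper: the same equivalence relation, the same weighted root-plus-copies construction with the identical budget~$s$ and target~$k=n-\ell+1$, the same convexity argument on the~$a_i$'s to force concentration in one copy, and the same vertex-blow-up to~$K_{s+2}$ for the unweighted reduction. The paper in fact fixes~$w_1 = m$ and~$w_2 = m\binom{n}{2}$ exactly (not just approximately), and for the unweighted reduction it places~$w$ \emph{distinct} edges between the blow-up cliques~$H_u$ and~$H_v$ rather than parallel edges, but these are cosmetic differences.
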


\section{Conclusion and open problems}\label{sec:conc}

We have shown that four important parameterized problems do not admit a kernelization
algorithm with a polynomial guarantee on the output size unless \unlesscompass and the polynomial
hierarchy collapses. 
We would like to mention here a few open problems very closely related
to our work.
\begin{itemize}
\item The $2^k$-vertex kernel for \cliquecover \cite{gghn:cc} is probably close to
optimal. Currently the fastest fixed-parameter algorithm for \cliquecover is a brute-force
algorithm on the exponential kernel. Is this double-exponential dependency on $k$ necessary?
\item The OR-composition for \dirmc in the case of two terminals excludes
the existence of a polynomial kernel for most graph separation problems in directed graphs.
There are two important cases not covered by this result: one is the \multicut problem
in directed acyclic graphs, and the second is \dirmc with deletable terminals.
To the best of our knowledge, it is also open whether the first problem is fixed-parameter tractable.
\item Both our OR-compositions for \multicut use a number of terminal pairs that is linear in the number of input instances.
Is \multicut parameterized by both the size of the cutset and the number of terminal pairs similarly hard to kernelize?
\end{itemize}

\bibliographystyle{plain}
\bibliography{no-colours-and-ids}

\end{document}